\newcommand{\ket}[1]{\ensuremath{|#1\rangle}}
\newcommand{\bra}[1]{\ensuremath{\langle #1|}}
\newcommand{\vecP}{{\bf P}}
\newcommand{\vecB}{{\bf B}}
\newcommand{\vecF}{{\bf F}}
\newcommand{\vecA}{{\bm \sigma}}
\newcommand{\Id}{\mathbb{I}}
\newcommand{\asb}{\{\sigma_{a|x}\}_{a,x}}
\newcommand{\proj}[1]{\ket{#1}\!\bra{#1}}
\newcommand{\tr}{\text{tr}}
\newcommand{\kv}{\text{\tiny {\rm KV}}}
\newcommand{\bc}{ \omega_s }
\newcommand{\rIso}{\rho_{\rm{iso}}}
\newcommand{\So}{\mathcal{S}_{\rm O}}
\newcommand{\F}{\mathcal{F}}
\newcommand{\E}{\mathbb{E}}
\newcommand{\Ea}{\mathbb{E}_{\rm A}}
\newcommand{\Eb}{\mathbb{E}_{\rm B}}
\newcommand{\M}{\mathcal{M}}
\newcommand{\W}{\mathcal{W}}
\newcommand{\SW}{\mathcal{S}_{\rm W}}
\newcommand{\SR}{\mathcal{S}_{\rm R}}
\newcommand{\sqp}[1]{\ensuremath{\left[ #1 \right]}}
\newcommand{\smp}[1]{\ensuremath{\left( #1 \right)}}
\newcommand{\bgp}[1]{\ensuremath{\left\lbrace #1 \right\rbrace}}
\newtheorem{theorem}{Theorem}[section]
\newtheorem{lemma}[theorem]{Lemma}
\newtheorem{proposition}[theorem]{Proposition}
\definecolor{nblue}{rgb}{0.2,0.2,0.8}
\begin{document}
\title{Quantum steerability: Characterization, quantification, superactivation, and unbounded amplification}

\author{Chung-Yun Hsieh}
\email{s103022502@m103.nthu.edu.tw}
\affiliation{Department of Physics, National Tsing Hua University, Hsinchu 300, Taiwan}
\author{Yeong-Cherng Liang}
\email{ycliang@mail.ncku.edu.tw}
\affiliation{Department of Physics, National Cheng Kung University, Tainan 701, Taiwan}
\author{Ray-Kuang Lee}
\affiliation{Department of Physics, National Tsing Hua University, Hsinchu 300, Taiwan}
\affiliation{Physics Division, National Center for Theoretical Science, Hsinchu 300, Taiwan}

\date{\today}

\begin{abstract}

Quantum steering, also called Einstein-Podolsky-Rosen steering, is the intriguing phenomenon associated with the ability of spatially separated observers to {\em steer}---by means of local measurements---the set of conditional quantum states accessible by a distant party. In the light of quantum information, {\em all} steerable quantum states are known to be resources for quantum information processing tasks. Here, via a quantity dubbed {\em steering fraction}, we derive a simple, but general criterion that allows one to identify quantum states that can exhibit quantum steering (without having to optimize over the measurements performed by each party), thus making an important step towards the characterization of steerable quantum states. The criterion, in turn, also provides upper bounds on the largest steering-inequality violation achievable by arbitrary finite-dimensional maximally entangled states. For the quantification of steerability, we prove that a strengthened version of the steering fraction is a {\em convex steering monotone} and demonstrate how it is related to two other steering monotones, namely, steerable weight and steering robustness. Using these tools, we further demonstrate the {\em superactivation} of steerability for a well-known family of entangled quantum states, i.e., we show how the steerability of certain entangled, but unsteerable quantum states can be recovered by allowing joint measurements on multiple copies of the same state. In particular, our approach allows one to explicitly construct a steering inequality to manifest this phenomenon.  Finally,  we prove that there exist examples of quantum states (including some which are unsteerable under projective measurements) whose steering-inequality violation can be arbitrarily amplified by allowing joint measurements on as little as three copies of the same state. For completeness, we also demonstrate how the largest steering-inequality violation can be used to bound the largest Bell-inequality violation and derive, analogously, a simple sufficient condition for Bell-nonlocality from the latter.
\end{abstract}


\maketitle

\section{Introduction}
\label{Sec:Intro}
From the famous Einstein-Podolsky-Rosen (EPR) paradox~\cite{EPR} to Bell's seminal discovery~\cite{Bell}, quantum theory has never failed to surprise us with its plethora of intriguing phenomena and mind-boggling applications~\cite{book,RMP-Bell}. Among those who made the bizarre nature of quantum theory evident was Schr\"odinger, who not only coined the term ``entanglement," but also pointed out that quantum theory allows for {\em steering}~\cite{Schrodinger}: through the act of local measurements on one-half of an entangled state, a party can {\em remotely} steer the set of (conditional) quantum states accessible by the other party.

Taking a quantum information perspective, the demonstration of steering can be viewed as the verification of entanglement involving an untrusted party~\cite{Wiseman}. Imagine that two parties Alice and Bob share some quantum state and Alice wants to convince Bob that the shared state is entangled, but Bob does not trust her. If Alice can convince Bob that the shared state indeed exhibits EPR steering, then Bob would be convinced that they share entanglement, as the latter is a prerequisite for steering. Note, however, that shared entanglement is generally insufficient to guarantee steerability. Interestingly, steerability is actually a necessary but generally insufficient condition for the demonstration of Bell nonlocality~\cite{Wiseman,Quintino}. Hence, steering represents a form of quantum inseparability in between entanglement and Bell nonlocality.

Apart from entanglement verification in a partially-trusted scenario, steering has also found applications in the distribution of secret keys in a partially trusted scenario~\cite{steering-QKD}. From a resource perspective, the steerability of a quantum state $\rho$, i.e., whether $\rho$ is steerable and the extent to which it can exhibit steering turns out to provide also an indication for the usefulness of $\rho$ in other quantum information processing tasks. For instance, steerability as quantified by steering robustness~\cite{Piani2015} is monotonically related to the probability of success in the problem of  subchannel discrimination when one is restricted to local measurements aided by one-way communications. 

The characterization of quantum states that are capable of exhibiting steering and the quantification of steerability are thus of relevance not just from a fundamental viewpoint, but also in quantum information. Surprisingly, very little is known in terms of which quantum state is (un)steerable (see, however,~\cite{Wiseman,steering-ineq,Girdhar,Taddei,Bowles:PRA:2016,Werner,Barrett,Almeida}). Here, we derive some generic sufficient conditions for steerability that can be applied to quantum state of arbitrary Hilbert space dimensions. Importantly, in contrast to the conventional approach of steering inequalities~\cite{steering-ineq} where an optimization over the many measurements that can be performed by each party is needed, our criteria only require the relatively straightforward computation of the fully entangled fraction~\cite{Horodecki-1}.

Given that some entangled quantum state $\rho$ cannot exhibit steering~\cite{Werner,Wiseman,Barrett,Almeida,Quintino,Bowles:PRA:2016}, a natural question that arises is whether the steerability of such a state can be {\em superactivated} by allowing joint measurements on multiple copies of $\rho$. In other words, is it possible that some $\rho$ that is not steerable becomes steerable if local measurements are performed instead on $\rho^{\otimes k}$ for some large enough $k$? Building on some recent results established for Bell nonlocality~\cite{Palazuelos,Cavalcanti-PRA}, we provide here an affirmative answer to the above question.

Note that even for a quantum state $\rho$ that is steerable, it is interesting to investigate how its steerability scales with the number of copies. For instance, is it possible to amplify the amount of steering-inequality violation by an {\em arbitrarily large} amount if only a small number of copies are available (see~\cite{Palazuelos} for analogous works in the context of Bell nonlocality)? Again, we provide a positive answer to this question, showing that an unbounded amount of amplification can be obtained by allowing joint measurements on as few as three copies of a quantum state that is barely steerable, or even unsteerable under projective measurements.

The rest of this paper is structured as follows. In Sec.~\ref{Sec:Prelim}, we give a brief overview of some of the basic notions in Bell nonlocality and EPR steering that we will need in subsequent discussions. There, inspired by the work of Cavalcanti {\it et al.}~\cite{Cavalcanti-PRA}, we also introduce the notion of {\em steering fraction} and {\em largest (steering-inequality) violation}, which are crucial quantities that lead to many of the findings mentioned above. For instance, in Sec.~\ref{Sec:Characterization}, we use these quantities to derive (1) a general sufficient condition for an arbitrary quantum state $\rho$ to be steerable and (2) upper bounds on the largest steering-inequality violation of an arbitrary finite-dimensional maximally entangled state as a function of its Hilbert space dimension $d$. Quantification of steerability using a strengthened version of steering fraction is discussed in Sec.~\ref{Sec:QuantifySteering} --- there, we also demonstrate how this steering monotone~\cite{Gallego} is related to the others, such as the steerable weight~\cite{SteerableWeight} and steering robustness~\cite{Piani2015}. In Sec.~\ref{Sec:SuperAmpli}, we show the superactivation of steerablity, provide a procedure to construct a steering inequality for this purpose, and demonstrate unbounded amplification of steerability.
We conclude in Sec.~\ref{Sec:Conclude} with a discussion and some open problems for future research.

\section{Preliminary notions}
\label{Sec:Prelim}

Consider a Bell-type experiment between two parties Alice and Bob. The correlation between measurement outcomes can be succinctly summarized by a vector of joint conditional distributions $\vecP \coloneqq \{P(a,b|x,y)\}$, where $x$ and $a$ ($y$ and $b$) are, respectively, the labels of  Alice's (Bob's) measurement settings and outcomes. The correlation admits a {\it local hidden-variable} (LHV) model if $\vecP$ is Bell-local~\cite{RMP-Bell}, i.e., can be decomposed for all $a,b,x,y$ as 
\begin{equation}\label{Eq:LHV}
	P(a,b|x,y) = \int P_{\lambda} P(a|x,\lambda)P(b|y,\lambda)d\lambda,
\end{equation}
for some {\em fixed} choice of $P_\lambda\ge0$ satisfying $\int P_\lambda\,d\lambda=1$ and single-partite distributions $\{P(a|x,\lambda)\}_{a,x,\lambda}$, and $\{P(b|y,\lambda)\}_{b,y,\lambda}$.

Any correlation that is not Bell-local (henceforth {\em nonlocal}) can be witnessed by the violation of some (linear) Bell inequality,\footnote{Bell inequalities that are not linear in $\vecP$, or which involve {\em complex} combinations of $P(a,b|x,y)$, have also been considered in the literature, but we will not consider them in this paper.}  
\begin{subequations}\label{Eq:BI}
\begin{equation}\label{Eq:BellFunctional}
	\sum_{a,b,x,y} B_{ab|xy}\, P(a,b|x,y)\stackrel{\mbox{\tiny LHV}}{\le} \omega(\vecB),
\end{equation} 
specified by a vector of real numbers $\vecB\coloneqq\{ B_{ab|xy}\}_{a,b,x,y}$ (known as the Bell coefficients) and the {\em local bound}
\begin{align}
\omega(\vecB) &\coloneqq \sup_{{\vecP} \in \text{LHV}} \sum_{a,b,x,y} B_{ab|xy}\, P(a,b|x,y).
\end{align}
\end{subequations}
In the literature, the left-hand side of Eq.~\eqref{Eq:BellFunctional} is also known as a {\em Bell polynomial}~\cite{Werner:PRA:2001} or a {\em Bell functional}~\cite{Buhrman}, as it maps any given correlation $\vecP$ into a real number. 

To determine if a quantum state (and  more generally if a given correlation $\vecP$) is nonlocal, one can, without loss of generality consider Bell coefficients that are non-negative, i.e., $B_{ab|xy}\ge 0$ for all $a,b,x,y$. To see this, it suffices to note that any Bell inequality, Eq.~\eqref{Eq:BellFunctional}, can be cast in a form that involves only non-negative Bell coefficients, e.g., by using the identity $\sum_{a,b} P(a,b|x,y)=1$, which holds for all $x,y$.
 
Specifically, in terms of the {\em nonlocality fraction} $\Gamma$~\cite{Cavalcanti-PRA},
\begin{align}\label{Eq:NonlocalityFraction}
	\Gamma (\vecP,\vecB) &\coloneqq \frac{1}{\omega(\vecB)} \sum_{a,b,x,y} B_{ab|xy}\, P(a,b|x,y),
\end{align}
$\vecP$ violates  the Bell inequality corresponding to $\vecB$ (and hence being nonlocal) if and only if $\Gamma (\vecP, \vecB)>1$.

Importantly, nonlocal quantum correlation 
\begin{equation}\label{Eq:QCor}
	P(a,b|x,y)=\text{tr}\left[\rho\,\left(E_{a|x} \otimes E_{b|y}\right)\,\right]
\end{equation}
can be obtained~\cite{Bell} by performing appropriate local measurements on a certain entangled quantum state $\rho$, where $\Ea\coloneqq\{E_{a|x}\}_{a,x}$ ($\Eb\coloneqq\{E_{b|y}\}_{b,y}$) are the sets of positive-operator-valued measures (POVMs)~\cite{book}  acting on  Alice's (Bob's) Hilbert space. From now onward, we will use $\Ea$ ($\Eb$) to denote a set of POVMs on Alice's (Bob's) Hilbert space, and $\E$ to denote their union, i.e.,  $\E\coloneqq\Ea\cup\Eb$.

Whenever the measurement outcome corresponding to $E_{a|x}$ is observed on Alice's side, quantum theory dictates that the (unnormalized) quantum state 
\begin{equation}\label{Eq:Assemblage}
	\sigma_{a|x}=\text{tr}_\text{A}[\rho\, (E_{a|x} \otimes \Id_\text{B})]
\end{equation}	
is prepared on Bob's side, where $\tr_\text{A}$ denotes the partial trace over Alice's subsystem and $\Id_\text{B}$ is the identity operator acting on Bob's Hilbert space. An {\em assemblage}~\cite{Pusey:PRA:2013} of conditional quantum states $\vecA:=\{\sigma_{a|x}\}_{a,x}$ is said to admit a {\em local-hidden-state} (LHS) model~\cite{Wiseman} if it is {\em unsteerable}, i.e., if it can be decomposed for all $a, x$ as 
\begin{eqnarray}
\sigma_{a|x} = \int P_{\lambda} P(a|x,\lambda)\sigma_{\lambda}\,d\lambda
\end{eqnarray}
for some {\em fixed} choice of $P_\lambda\ge0$ satisfying $\int P_\lambda\,d\lambda=1$, single-partite density matrices $\{\sigma_\lambda\}_\lambda$, and single-partite distribution $\{P(a|x,\lambda)\}_{a,x,\lambda}$. Equivalently, a correlation $\vecP$ admits a LHS model if it can be decomposed as
\begin{equation}\label{Eq:LHS}
	P(a,b|x,y) = \int P_{\lambda} P(a|x,\lambda)\tr\left(E_{b|y}\,\sigma_\lambda \right)d\lambda.
\end{equation}

Conversely, an assemblage $\vecA$ that is steerable can be witnessed by the violation of a steering inequality~\cite{steering-ineq}, 
\begin{subequations}\label{Eq:SteeringIneq}
\begin{equation}\label{Eq:SteeringFunctional}
	\sum_{a,x} \text{tr}(F_{a|x} \sigma_{a|x})\stackrel{\text{\tiny LHS}}{\le} \bc (\vecF),
\end{equation} 
specified by a set of Hermitian matrices $\vecF \coloneqq \{ F_{a|x}\}_{a,x}$
and the {\em steering bound}
\begin{align}\label{Eq:SteeringBound}
	\bc (F) &\coloneqq \sup_{\sigma \in \text{LHS}} \sum_{a,x} \text{tr}(F_{a|x} \sigma_{a|x}).
\end{align}
\end{subequations}
In the literature, the left-hand side of Eq.~\eqref{Eq:SteeringFunctional} is also known as a {\em steering functional}~\cite{JPA15}, as it maps any given assemblage $\vecA=\asb$ to a real number. 

As with Bell nonlocality, in order to determine if a given assemblage is steerable, one can consider, without loss of generality, steering functionals defined only by non-negative, or equivalently positive semidefinite $F_{a|x}$, i.e., $F_{a|x}\succeq 0$ for all $a$, $x$.\footnote{Here and after, the symbol $A\succeq0$ means that the matrix $A$ is positive semidefinite, i.e., having only non-negative eigenvalues.} To see this, it is sufficient to note that any steering inequality, Eq.~\eqref{Eq:SteeringFunctional}, can be rewritten in a form that involves only non-negative $F_{a|x}$, e.g., by using the identity $\sum_a\tr\left(\sigma_{a|x}\right)=1$, which holds for all $x$. 
Hereafter, we thus restrict our attention to $\vecF$ ($\vecB$) having only non-negative $F_{a|x}$ ($B_{ab|xy}$). 
 
In analogy with the nonlocality fraction, we now introduce the  {\em steering fraction}
\begin{align}\label{Eq:SteeringFraction}
	\Gamma_s (\vecA, \vecF)&:= \frac{1}{\bc (\vecF)}\sum_{a,x} \text{tr}(F_{a|x} \sigma_{a|x} ),
\end{align}
to capture the steerability of an assemblage; an assemblage $\vecA$ violates the steering inequality corresponding to $\vecF$ if and only if $\Gamma_s (\vecA, \vecF)>1$. 
Whenever we want to emphasize the steerability of the underlying state $\rho$ giving rise to the assemblage $\rho$, we will write $\Gamma_s(\{\rho,\Ea\},\vecF)$ instead of $\Gamma_s(\vecA,\vecF)$ where $\Ea=\{E_{a|x}\}_{a,x}$, $\vecA:=\asb$, and $\rho$ are understood to satisfy Eq.~\eqref{Eq:Assemblage}.  
In particular, $\rho$ is steerable with $\vecF$ if and only if the largest violation of the steering inequality corresponding to $\vecF$~\cite{JPA15},
\begin{eqnarray}\label{Eq:LV}
	LV_s(\rho,\vecF)\coloneqq\sup_{\Ea}\Gamma_s(\{\rho,\Ea\},\vecF),
\end{eqnarray}
is greater than 1.

As mentioned in Sec.~\ref{Sec:Intro}, Bell nonlocality is a stronger form of quantum nonlocality than quantum steering. 
Let us now illustrate this fact by using the quantities that we have introduced in this section. 
For any given $\vecB=\{B_{ab|xy}\}_{a,b,x,y}$ and Bob's measurements specified by the POVMs $\Eb=\{E_{b|y}\}_{b,y}$, one obtains an {\em induced} steering inequality specified by
\begin{equation}\label{Eq:InducedF}
	\vecF_{(\vecB; \Eb)}\coloneqq\left\{\sum_{b,y} B_{ab|xy} E_{b|y} \right\}_{a,x}.
\end{equation}
Using this equation, the definition
of the steering bound $\bc \sqp{\vecF_{(\vecB; \Eb)}}$, the local bound $\omega(\vecB)$, and the fact that $\tr(\sigma_{\lambda} E_{b|y})$ is only a {\em particular} kind of response function of the form $P(b|y,\lambda)$, one sees that 
\begin{eqnarray}\label{Eq:Bound:LvsS}
	\bc \sqp{\vecF_{(\vecB; \Eb)}}\le \omega (\vecB).
\end{eqnarray}
A geometrical representation of this fact can be found in Fig.~\ref{Fig1}. 

\begin{figure}[h!]
\scalebox{0.8}{\includegraphics{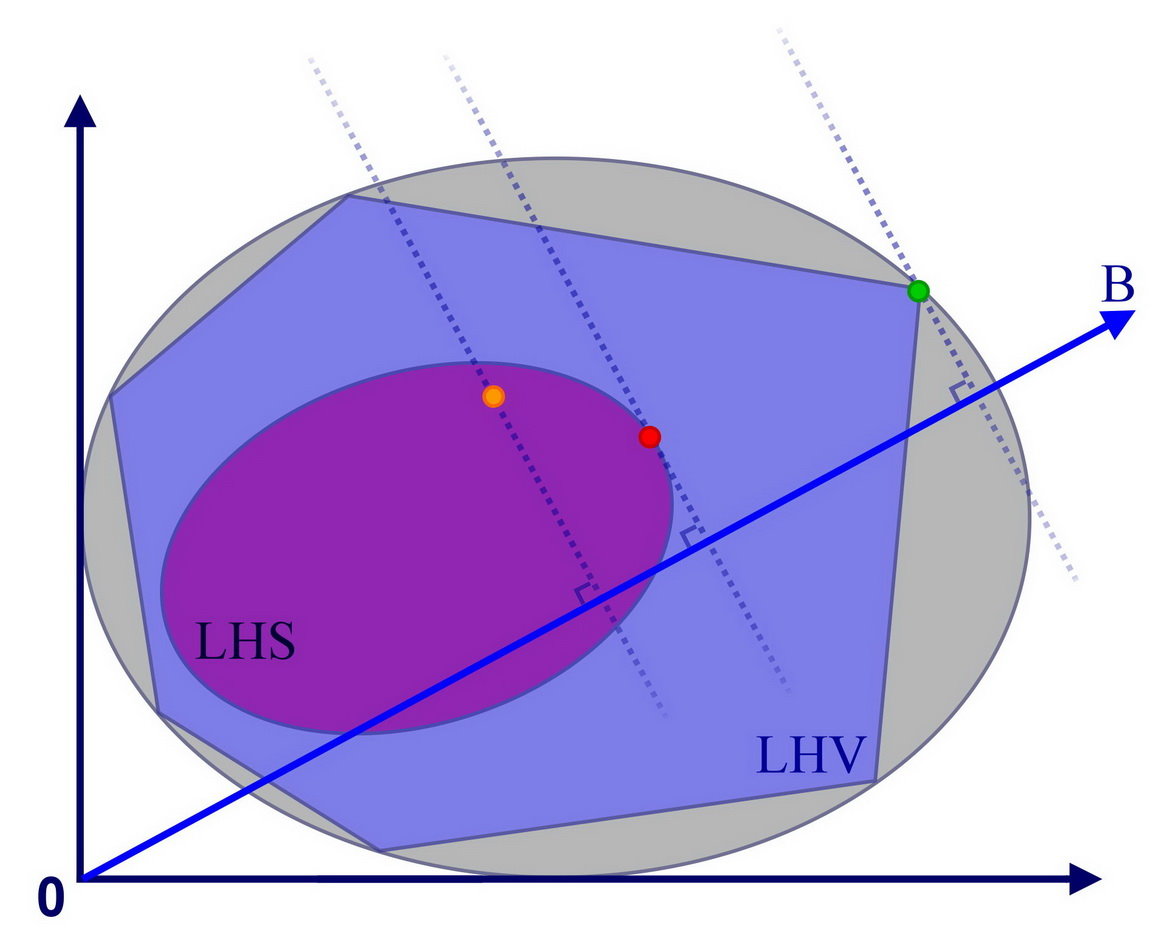} }
\caption{Schematic representation of the relationship between Bell inequalities, the induced steering inequalities, and the different sets of correlations in the space of correlations $\{\vecP\}$. From the outermost to the innermost, we have, respectively, the set of quantum distributions [the grey ellipse, cf. Eq.~\eqref{Eq:QCor}], the set of correlations admitting a LHV model [the blue polygon, cf. Eq.~\eqref{Eq:LHV}], and the set of correlations admitting a LHS model for arbitrary $\Eb$ [the purple ellipse, see Eq.~\eqref{Eq:LHS}]. In this space, the collection of Bell coefficients $\{B_{ab|xy}\}_{a,b,x,y}$ defines a vector $\vecB$ while the Bell functional corresponds to the inner product between $\vecB$ and a generic vector $\vecP$ in this space. Hence, $\omega(\vecB)$ is the largest inner product attainable by all $\vecP\in \text{LHV}$ (blue polygon) achieved, e.g., by the green point, while $\sup_{\Eb}\omega_s \sqp{\vecF_{(\vecB,\Eb )}}$ is the highest inner product attainable by all $\vecP\in$ LHS (purple ellipse) achieved, e.g., by the red point. For {\em any} particular choice of Bob's POVMs the set of correlations defined by Eq.~\eqref{Eq:LHS} form a convex subset (not shown) of the full set of LHS correlations and thus $\omega_s \sqp{\vecF_{(\vecB,\Eb )}}$ may be achieved by some non extremal point of the LHS set. For example, in the figure, $\omega_s [\vecF_{(\vecB,\tilde{\E}_\text{B})}]$ is achieved by the interior orange point for a given $\tilde{\E}_\text{B}$.}
\label{Fig1} 
\end{figure}
Hence, for any correlation $\vecP$ derived by performing local measurements $\Eb$ on Bob's side, and the local measurements $\Ea$ on Alice's side when they share a bipartite state $\rho$, it follows from Eqs.~\eqref{Eq:NonlocalityFraction},~\eqref{Eq:QCor},~\eqref{Eq:SteeringFraction}, and~\eqref{Eq:Bound:LvsS} that
\begin{equation}\label{Eq:BellvsSteering}
	\Gamma(\vecP,\vecB)\le \Gamma_s \sqp{\left\{\rho,\Ea\right\},\vecF_{(\vecB; \Eb)}}.
\end{equation}
From here, it is clear that whenever $\rho$ violates the Bell inequality specified by $\vecB$, i.e., $\Gamma(\vecP,\vecB)>1$, it must also be the case that $\rho$ violates the steering inequality induced by $\vecB$ and $\Eb$, cf. Eq.~\eqref{Eq:InducedF}.

\section{Sufficient condition for steerability and the largest steering-inequality violation}
\label{Sec:Characterization}

Equipped with the tools presented above, our immediate goal now is to derive a sufficient condition for any quantum state $\rho$ acting on $\mathbb{C}^d\otimes\mathbb{C}^d$ to be steerable in terms of its {\em fully entangled fraction} (FEF)~\cite{Horodecki-1,Albererio} 
\begin{equation}\label{Eq:FEF}
\begin{split}
	\F(\rho)\coloneqq &\max_{\Psi}\langle\Psi|\rho|\Psi\rangle\\
	=&\max_{U}\bra{\Psi^+_d}(U\otimes \Id_B)\,\rho\, (U\otimes \Id_B)^\dag\ket{\Psi^+_d},
\end{split}
\end{equation}
which is a quantity closely related to the teleportation~\cite{Teleportation} power of a quantum state. In the above definition,  $\ket{\Psi^+_d}\coloneqq\frac{1}{\sqrt{d}}\sum_{i=0}^{d-1}\ket{i}\ket{i}$ is the generalized singlet, and the maximization is taken over all maximally entangled states $\ket{\Psi}$ in $\mathbb{C}^d\otimes\mathbb{C}^d$, or equivalently over all $d\times d$ unitary operators $U$. Note that in arriving at the second line of Eq.~\eqref{Eq:FEF}, we make use of the fact that $\ket{\Psi^+_d}$ is invariant under local unitary transformation of the form $U\otimes U^*$, where $^*$ denotes complex conjugation. Thus, any maximally entangled state in $\mathbb{C}^d\otimes\mathbb{C}^d$ can be obtained from $\ket{\Psi^+_d}$ by a local unitary transformation acting on Alice's Hilbert space alone. Alternatively, one may also make use of the identity $(A\otimes\mathbb{I}_{\rm B})\ket{\Psi_d^+}=(\mathbb{I}_A\otimes A^\text{\tiny T})\ket{\Psi_d^+}$ which holds for all normal operators $A$~\cite{Jozsa}, where $A^\text{\tiny T}$ is the transpose of $A$ (defined in the Schmidt basis of $\ket{\Psi^+_d}$).

Clearly, the FEF of a state $\rho$ is invariant under local unitary transformation but may decrease when subjected to the $(U\otimes U^*)$-twirling operation~\cite{Horodecki-2,Bennett-96}
\begin{eqnarray}\label{Eq:QuantumTwirling}
	T(\rho)\coloneqq \int_{U(d)}(U \otimes U^*)\rho(U \otimes U^*)^{\dagger}dU,
\end{eqnarray}
where $dU$ is the Haar measure over the group of $d\times d$ unitary matrices $U(d)$. Using a somewhat similar reasoning, one can establish the following lemma (whose proof can be found in Appendix~\ref{App:Proof:Lemma}). 
\begin{lemma}\label{Lemma:Twirling-SF}
For any given state $\rho$, local POVMs $\Ea=\{E_{a|x}\}_{a,x}$, and $\vecF=\{ F_{a|x}\succeq0\}_{a,x}$ acting on $\mathbb{C}^d\otimes\mathbb{C}^d$, there exists another state $\rho'$ and unitary operators $U$ and $U'$ in $U(d)$ such that
\begin{subequations}\label{Eq:Lemma:Conditions}
\begin{gather}
	\F[T({\rho}')]=\bra{\Psi^+_d}\rho'\ket{\Psi^+_d}=\F(\rho')=\F(\rho),\label{Eq:FEFSame}\\		
	\Gamma_s\smp{\bgp{\rho,\tilde{\E}_{\rm A}},\tilde{\vecF}} \ge\Gamma_s[\{T({\rho}'),\Ea\},\vecF], \label{Eq:Gamma_s:Ineq}
\end{gather}
\end{subequations}
where $\tilde{\E}_{\rm A}\coloneqq\left\{U^\dag\,E_{a|x}\,U\right\}_{a,x}$ and $\tilde{\vecF}\coloneqq\left\{U'^\dag\,F_{a|x}\,U'\right\}_{a,x}$.
\end{lemma}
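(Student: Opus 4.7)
The plan is to first reduce to the case where the fully entangled fraction of $\rho$ is witnessed by the canonical singlet $\ket{\Psi^+_d}$ itself, so that the $(V\otimes V^*)$-invariance of $\ket{\Psi^+_d}$ built into the twirl $T(\cdot)$ can be fully exploited. Concretely, I would let $W$ be a unitary attaining the max in Eq.~\eqref{Eq:FEF}, so that $\F(\rho)=\bra{\Psi^+_d}(W\otimes\Id)\rho(W\otimes\Id)^\dag\ket{\Psi^+_d}$, and then set $\rho'\defeq(W\otimes\Id)\rho(W\otimes\Id)^\dag$. By construction $\F(\rho')=\F(\rho)$, and $\ket{\Psi^+_d}$ now itself witnesses this FEF, i.e., $\bra{\Psi^+_d}\rho'\ket{\Psi^+_d}=\F(\rho')$.

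To prove Eq.~\eqref{Eq:FEFSame}, I would use $(V\otimes V^*)\ket{\Psi^+_d}=\ket{\Psi^+_d}$ for every $V\in U(d)$, which gives $\bra{\Psi^+_d}T(\rho')\ket{\Psi^+_d}=\bra{\Psi^+_d}\rho'\ket{\Psi^+_d}=\F(\rho')$. Combining this with the trivial lower bound $\F[T(\rho')]\ge\bra{\Psi^+_d}T(\rho')\ket{\Psi^+_d}$ and the upper bound $\F[T(\rho')]\le\F(\rho')$---which follows from convexity of $\F$ (being a supremum of linear functionals) together with its invariance under local unitaries---all three quantities in Eq.~\eqref{Eq:FEFSame} collapse to $\F(\rho)$.

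For Eq.~\eqref{Eq:Gamma_s:Ineq}, I would first observe that the steering bound is invariant under the unitary relabeling $F_{a|x}\mapsto U'^\dag F_{a|x}U'$ because the LHS set is closed under $\sigma_\lambda\mapsto U'\sigma_\lambda U'^\dag$; hence $\omega_s(\tilde{\vecF})=\omega_s(\vecF)$, and it suffices to compare the unnormalized steering functionals. Writing $M\defeq\sum_{a,x}E_{a|x}\otimes F_{a|x}$, the value associated with the RHS of Eq.~\eqref{Eq:Gamma_s:Ineq} is $\tr[M\,T(\rho')]$. Expanding the twirl and using cyclicity of the trace yields
\begin{equation*}
\tr[M\,T(\rho')]=\int_{U(d)}\!dV\;\sum_{a,x}\tr\!\left[\bigl(V^\dag E_{a|x}V\otimes V^T F_{a|x}V^*\bigr)\rho'\right].
\end{equation*}
Since the integrand is continuous on the compact group $U(d)$, the supremum over $V$ is attained at some $V_0$ and dominates this average. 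Substituting $\rho'=(W\otimes\Id)\rho(W\otimes\Id)^\dag$ and choosing $U\defeq V_0 W$ together with $U'\defeq V_0^*$ (so $U'^\dag=V_0^T$), the resulting expression is precisely the unnormalized steering functional $\sum_{a,x}\tr[U'^\dag F_{a|x}U'\,\tr_\text{A}((U^\dag E_{a|x}U\otimes\Id)\rho)]$ for the assemblage $\{\rho,\tilde{\E}_\text{A}\}$ evaluated against $\tilde{\vecF}$; dividing by $\omega_s$ gives Eq.~\eqref{Eq:Gamma_s:Ineq}.

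The main obstacle I expect is purely bookkeeping: keeping the roles of the two unitaries distinct---$W$, which pre-rotates $\rho$ so that $\ket{\Psi^+_d}$ attains its FEF, versus $V_0$, which realises the supremum of the twirling integrand---together with the transpose/conjugation pattern that $V\otimes V^*$ induces on the $F$-factor, so that the final $U$ and $U'$ in the lemma are correctly read off as $V_0W$ and $V_0^*$. Once this is set up, the proof reduces to the identity $(V\otimes V^*)\ket{\Psi^+_d}=\ket{\Psi^+_d}$, cyclicity of the trace, and the observation that an average is dominated by its supremum.
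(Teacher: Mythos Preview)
Your proof is correct and follows essentially the same route as the paper's own argument in Appendix~\ref{App:Proof:Lemma}: define $\rho'$ via the FEF-optimizing unitary (your $W$, the paper's $U_\F$), expand $\Gamma_s(\{T(\rho'),\Ea\},\vecF)$ as a Haar integral, bound the average by its supremum over the twirling unitary (your $V_0$, the paper's $U_\Gamma$), and read off $U=V_0W$, $U'=V_0^*$. Your explicit justification of $\omega_s(\tilde{\vecF})=\omega_s(\vecF)$ and of the FEF equalities via convexity is slightly more detailed than the paper's treatment, but the structure is identical.
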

While we shall be concerned, generally, only with non-negative $\vecF$ [we will say $\vecF$ ($\vecB$) is non-negative if it is formed by non-negative $F_{a|x}$ ($B_{ab|xy}$) from now on], it is worth noting that Lemma~\ref{Lemma:Twirling-SF} also holds for $\vecF$ formed by arbitrary Hermitian (but not necessarily non-negative) $F_{a|x}$ if the steering fraction and the corresponding steering bound are defined with an absolute sign, i.e., $\Gamma_s (\vecA, \vecF)= \left|\frac{1}{\bc (\vecF)}\sum_{a,x} \text{tr}(F_{a|x} \sigma_{a|x} )\right|$ and $\bc (\vecF) \coloneqq \sup_{\sigma \in \text{LHS}} \left|\sum_{a,x} \text{tr}(F_{a|x} \sigma_{a|x})\right|$.

Recall from~\cite{Horodecki-2} that the $\left(U\otimes U^*\right)$-twirling operation $T(\rho)$ always gives rise to an isotropic state
\begin{eqnarray}\label{Eq:Isotropic}
	\rho_{\text{iso}}(p)\coloneqq p|\Psi^+_d\rangle\langle\Psi^+_d| + (1-p)\frac{\mathbb{I}}{d^2},
\end{eqnarray}
where $p\in[-\tfrac{1}{d^2-1},1]$ and $\Id$ is the identity operator acting on the composite Hilbert space. In this case, it thus follows from Eq.~\eqref{Eq:Lemma:Conditions} that 
\begin{equation}\label{Eq:Estimate1}
	\Gamma_s \smp{\bgp{\rho, \tilde{\E}_{\rm A}}, \tilde{\vecF}}\ge  \F(\rho) \Gamma_s (\left\{\ket{\Psi_d^+}, \Ea\right\}, \vecF)+Z ,
\end{equation}
where $Z\ge0$ is the contribution of $\Id-\proj{\Psi_d^+}$ towards the steering fraction.
Maximizing both sides over $\E$ and dropping contribution from the second term gives $LV_s(\rho, \vecF)\ge \F(\rho)LV_s(\ket{\Psi_d^+},\vecF)$. Recall from the definition of $LV_s$ that a steering inequality is violated if $LV_s>1$, thus rearranging the term gives the following sufficient condition for $\rho$ to be steerable.
\begin{theorem}\label{Thm:SufficentSteerability}
Given a state $\rho$ and  $\vecF=\{F_{a|x}\succeq 0\}_{a,x}$  acting on $\mathbb{C}^d\otimes\mathbb{C}^d$, a sufficient condition for $\rho$ to be steerable from Alice to Bob\footnote{Note that there exist quantum states that are steerable from Alice to Bob but not the other way around; see, e.g.,~\cite{Bowles:PRL:2015,Quintino}.} is 
\begin{eqnarray}\label{Eq:F-LV}
	\F(\rho)>\frac{1}{LV_s(\ket{\Psi_d^+},\vecF)}.
\end{eqnarray}
\end{theorem}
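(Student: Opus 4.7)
The plan is to feed Lemma~\ref{Lemma:Twirling-SF} into the structure of isotropic states and then pass to the supremum over Alice's POVMs. Applying the lemma to $\rho$, any $\Ea$, and $\vecF$ yields a state $\rho'$, unitaries $U,U'\in U(d)$, and the inequality~(\ref{Eq:Gamma_s:Ineq}). The state $T(\rho')$ is isotropic by the defining property of the $(U\otimes U^*)$-twirl, and~(\ref{Eq:FEFSame}) pins down $\F[T(\rho')]=\F(\rho)$. Because an isotropic state is uniquely determined by its FEF, this forces
\begin{equation*}
T(\rho')=\F(\rho)\,\proj{\Psi^+_d}+\frac{1-\F(\rho)}{d^2-1}\bigl(\Id-\proj{\Psi^+_d}\bigr).
\end{equation*}

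Linearity of the steering fraction in the shared state then gives
\begin{equation*}
\Gamma_s[\{T(\rho'),\Ea\},\vecF]=\F(\rho)\,\Gamma_s[\{\ket{\Psi^+_d},\Ea\},\vecF]+Z,
\end{equation*}
where $Z$ collects the contribution from $(\Id-\proj{\Psi^+_d})/(d^2-1)$. Since $F_{a|x}\succeq 0$ by hypothesis, each $E_{a|x}\succeq 0$ by the POVM condition, and $\Id-\proj{\Psi^+_d}\succeq 0$, every summand defining $Z$ is the trace of a product of positive semidefinite operators and therefore nonnegative; hence $Z\ge 0$. Combining with~(\ref{Eq:Gamma_s:Ineq}) yields
\begin{equation*}
\Gamma_s[\{\rho,\tilde{\E}_{\rm A}\},\tilde{\vecF}]\ \ge\ \F(\rho)\,\Gamma_s[\{\ket{\Psi^+_d},\Ea\},\vecF].
\end{equation*}

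I then take the supremum over Alice's POVMs. Since $\tilde{\E}_{\rm A}=\{U^\dagger E_{a|x}U\}_{a,x}$ is related to $\Ea$ by conjugation by the fixed $U$, which is a bijection on POVMs, the supremum over $\Ea$ sweeps out all POVMs for $\tilde{\E}_{\rm A}$ as well; the left side therefore becomes $LV_s(\rho,\tilde{\vecF})$ and the right becomes $\F(\rho)\,LV_s(\ket{\Psi^+_d},\vecF)$. Under the theorem's hypothesis $\F(\rho)>1/LV_s(\ket{\Psi^+_d},\vecF)$, the right side exceeds $1$, so $LV_s(\rho,\tilde{\vecF})>1$. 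By definition~(\ref{Eq:LV}), this means there is a choice of Alice's measurements producing an assemblage from $\rho$ that violates the steering inequality associated with $\tilde{\vecF}$; equivalently, $\rho$ is steerable from Alice to Bob.

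The only substantive input is Lemma~\ref{Lemma:Twirling-SF}; once it is invoked, everything else is linear-algebra bookkeeping, so the main obstacle really sits inside the lemma rather than in this corollary. The one subtlety worth emphasizing here is that the coefficient of $\proj{\Psi^+_d}$ in the isotropic decomposition is the FEF $\F(\rho)$ itself, not the mixing parameter $p$ of~(\ref{Eq:Isotropic})---these are related by $\F(\rho)=p+(1-p)/d^2$, and expanding in the orthogonal basis $\{\proj{\Psi^+_d},(\Id-\proj{\Psi^+_d})/(d^2-1)\}$ makes the FEF appear directly as the relevant coefficient, which is exactly what yields~(\ref{Eq:F-LV}).
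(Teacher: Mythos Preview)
Your proof is correct and follows essentially the same route as the paper: invoke Lemma~\ref{Lemma:Twirling-SF}, expand the isotropic state $T(\rho')$ so that $\F(\rho)$ appears as the coefficient of $\proj{\Psi^+_d}$, drop the nonnegative remainder $Z$, and pass to the supremum over Alice's measurements (this is exactly the paper's Eq.~\eqref{Eq:Estimate1} and the sentence following it). Your write-up is in fact slightly more careful than the paper's terse paragraph---you spell out why $Z\ge0$ and correctly flag that the violated steering functional is $\tilde{\vecF}$ rather than $\vecF$ itself; the only caveat is that the unitaries $U,U'$ furnished by the lemma may in principle depend on $\Ea$, but this does not affect the conclusion since one only needs \emph{some} choice of $\Ea$ and \emph{some} $\tilde{\vecF}$ with $\Gamma_s>1$ to certify steerability.
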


Since $\F(\rho^{\otimes k})\ge [\F(\rho)]^k$, a direct corollary of Theorem~\ref{Thm:SufficentSteerability} is that, when joint local measurements are allowed, $\rho^{\otimes k}$ becomes steerable if for some $k>1$:
\begin{eqnarray}\label{Eq:F-LV:k-copy}
	\F(\rho)>\left[\frac{1}{LV_s(\ket{\Psi_{d^k}^+},\vecF)}\right]^{\frac{1}{k}}.
\end{eqnarray}

It is worth noting that Theorem~\ref{Thm:SufficentSteerability} holds for general POVMs. If one restricts to projective measurements, it is clear that that the corresponding largest violation, which we denote by $LV^\pi_s(\ket{\Psi_d^+},\vecF)$ may be suboptimal, likewise for the threshold for $\F$ derived from Eq.~\eqref{Eq:F-LV}, i.e., $\F>1/LV^\pi_s(\ket{\Psi_d^+},\vecF) \ge 1/LV_s(\ket{\Psi_d^+},\vecF)$. Using the fact mentioned between Eqs.~\eqref{Eq:FEF} and~\eqref{Eq:QuantumTwirling}, it is easy to see that $\ket{\Psi_d^+}$ in inequality~\eqref{Eq:F-LV} can be replaced by any other state that is local-unitarily equivalent to $\ket{\Psi_d^+}$. Note also that an exactly analogous treatment can be applied to Bell nonlocality, thereby giving a sufficient condition for bipartite Bell nonlocality in terms of the fully entangled fraction of a state $\rho$ (see Appendix~\ref{App:Bell} for details).

Let us also briefly comment on the tightness of the sufficient conditions derived from Theorem~\ref{Thm:SufficentSteerability}. Evidently, in order for the sufficient condition derived therefrom to be tight, the inequality in Eq.~\eqref{Eq:Gamma_s:Ineq} must be saturated and the non-negative term $Z$ that appears in Eq.~\eqref{Eq:Estimate1} must vanish. While the first of these conditions can be met, e.g., by choosing a state $\rho$ that is invariant under the $\left(U\otimes U^*\right)$ twirling (and hence being an isotropic state), the second of these conditions generally cannot  be met at the same time. The relevance of Theorem~\ref{Thm:SufficentSteerability} thus lies in its simplicity and generality, as we now demonstrate with the following explicit examples.

\subsection{Explicit examples of sufficient condition}

As an application of our sufficient condition, consider the $\vecF_\text{\tiny MUB}:=\{\proj{\phi_{a|x}}\}_{a,x}$ induced by a set of $n$ mutually unbiased (orthonormal) bases (MUB) $\{\ket{\phi_{a|x}}\}_{a,x}$ in a $d$-dimensional Hilbert space~\cite{Marciniak15}. $\vecF_\text{\tiny MUB}$ is non-negative since it involves only rank-one projectors~\cite{Marciniak15}. It was shown in~\cite{Marciniak15} that 
\begin{equation}
	LV_s(\ket{\Psi_d^+},\vecF)\ge \max \left\{ \frac{n\sqrt{d}}{n+1+\sqrt{d}}, \frac{d\sqrt{n}}{\sqrt{n}+d-1}\right\}.
\end{equation}
Using this in Theorem~\ref{Thm:SufficentSteerability}, one finds that a sufficient condition for {\em any} bipartite state $\rho$ in $\mathbb{C}^d\otimes \mathbb{C}^d$ to be steerable is 
\begin{equation}\label{Eq:SteerableBound}
	\F(\rho)>\min \left\{ \frac{n+1+\sqrt{d}}{n\sqrt{d}}, \frac{\sqrt{n}+d-1}{d\sqrt{n}}\right\}.
\end{equation}
When $d$ is a power of a prime number~\cite{Marciniak15,Wootters}, one can find $n=d+1$ MUB and the second of the two arguments in the right-hand side of Eq.~\eqref{Eq:SteerableBound} is smaller, thus simplifying the sufficient condition to 
\begin{equation}\label{Eq:F:Sufficient}
	\F(\rho)>\frac{d-1+\sqrt{d+1}}{d\sqrt{d+1}}.
\end{equation}
This implies, for instance, (two-way) steerability of an {\em arbitrary} two-qubit state $\rho$ if $\F(\rho)>\tfrac{1+\sqrt{3}}{2\sqrt{3}}\approx 0.7887$ and an {\em arbitrary} two-qutrit state $\rho$ if $\F(\rho) > \tfrac{2}{3}$, etc. Asymptotically, when $d\to \infty$, the sufficient condition of Eq.~\eqref{Eq:F:Sufficient} simplifies to $\F(\rho)\gtrsim \frac{1}{\sqrt{d}}$, making it evident that this simple criterion becomes especially effective in detecting steerable states for large $d$. Nonetheless, it is worth noting that when $d=2^m$ and with $m\ge 24$, the sufficient condition of Bell nonlocality given in Eq.~\eqref{Eq:SufficientKV} (which is also a sufficient condition for steerability by the fact that any quantum state that is Bell nonlocal is also steerable) already outperforms the sufficient condition given in Eq.~\eqref{Eq:F:Sufficient}.

\subsection{Upper bounds on the largest steering-inequality violation of $\ket{\Psi^+_d}$}
\label{Sec:UB-LV}

Instead of sufficient conditions for steerability, Theorem~\ref{Thm:SufficentSteerability} can also be used to derive upper bounds on the largest steering-inequality violation by $\ket{\Psi^+_d}$ for arbitrary non-negative \vecF, as we now demonstrate. 

Consider again the isotropic state given in Eq.~\eqref{Eq:Isotropic}, which is known to be entangled if and only if $p > p_{\text{ent}}\coloneqq 1/(d+1)$~\cite{Horodecki-2,RMP-Bell}. Moreover,  $\rho_{\text{iso}}(p)$ is non steerable under general POVMs if~\cite{Almeida} $p\le\tilde{p}^{\phi}:=\tfrac{3d-1}{d^2-1}\left(1-\tfrac{1}{d}\right)^d$,  but steerable with projective measurements if and only if~\cite{Wiseman} $p > p_{\text{steer}}\coloneqq (H_d-1)/(d-1)$ where $H_d\coloneqq \sum_{n=1}^{d}\frac{1}{n}$ is the $d$th Harmonic number. For $p\in[0,1]$, it is easy to see that 
\begin{equation}\label{Eq:F-rIso-p}
	\F[\rho_{\text{iso}}(p)]=p+\tfrac{1-p}{d^2}, 
\end{equation}
thus the critical value of FEF beyond which $\rho_{\text{iso}}(p)$ becomes steerable with projective measurements is $\F[\rIso(p)]>\F_{\text{iso,} d}^{\text{steer}}:=\tfrac{H_{d}+H_{d} d-d}{d^2}$. 

In order for this steerability criterion for isotropic state to be compatible with Theorem~\ref{Thm:SufficentSteerability}, we must have $\frac{1}{LV_s^\pi(\ket{\Psi_d^+},\vecF)}\ge \frac{H_{d}+H_{d} d-d}{d^2}$. Otherwise, one would find $\rIso(p)$ with $\F[\rho_{\text{iso}}(p)]<\F_{\text{iso,} d }^{\text{steer}}$ that is steerable according to Theorem~\ref{Thm:SufficentSteerability}, which is a contradiction. Thus, the above necessary and sufficient condition for steerability of $\rho_{\text{iso}}(p)$ with projective measurements implies the following upper bound on the largest steering-inequality violation of $\ket{\Psi^+_d}$.

\begin{theorem}
The largest steering-inequality violation of $\ket{\Psi^+_d}$ for all $\vecF=\{ F_{a|x}\succeq 0\}_{a,x}$ is upper bounded as
\begin{eqnarray}\label{Eq:UpperBound}
	LV_s^\pi(\ket{\Psi_d^+},\vecF)\le \frac{d^2}{H_{d}+H_{d}d-d}
\end{eqnarray}
for projective measurements.
\end{theorem}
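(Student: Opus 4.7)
The plan is to argue by contradiction, exploiting the interplay between Theorem~\ref{Thm:SufficentSteerability}, the known FEF of isotropic states, and Wiseman \textit{et al.}'s exact characterization of projective-measurement steerability of $\rIso(p)$. First, I would set $\F^{\text{steer}}_{\text{iso},d} := \tfrac{H_d + H_d d - d}{d^2}$ and note that from Eq.~\eqref{Eq:F-rIso-p}, the condition $p > p_\text{steer} = (H_d-1)/(d-1)$ is equivalent to $\F[\rIso(p)] > \F^{\text{steer}}_{\text{iso},d}$. Hence the cited result of Wiseman \textit{et al.} can be rephrased as: $\rIso(p)$ is unsteerable under projective measurements whenever $\F[\rIso(p)] \le \F^{\text{steer}}_{\text{iso},d}$.

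Next I would observe that Theorem~\ref{Thm:SufficentSteerability} was derived via Lemma~\ref{Lemma:Twirling-SF} and Eq.~\eqref{Eq:Estimate1} without ever enlarging the set of POVMs used; in particular, restricting both $\Ea$ and $\tilde{\E}_\text{A}$ to projective measurements yields the analogous statement: for any $\rho$ and non-negative $\vecF$ on $\mathbb{C}^d \otimes \mathbb{C}^d$, the state $\rho$ is steerable from Alice to Bob under projective measurements whenever $\F(\rho) > 1/LV_s^\pi(\ket{\Psi_d^+}, \vecF)$. This ``projective version'' of Theorem~\ref{Thm:SufficentSteerability} is the key input.

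Now suppose, for contradiction, that some non-negative $\vecF$ satisfies $LV_s^\pi(\ket{\Psi_d^+}, \vecF) > d^2/(H_d + H_d d - d)$, i.e., $1/LV_s^\pi(\ket{\Psi_d^+}, \vecF) < \F^{\text{steer}}_{\text{iso},d}$. Then I can pick $p \in [0,1]$ such that
\begin{equation*}
\tfrac{1}{LV_s^\pi(\ket{\Psi_d^+}, \vecF)} < \F[\rIso(p)] = p + \tfrac{1-p}{d^2} \le \F^{\text{steer}}_{\text{iso},d},
\end{equation*}
which is possible because $\F[\rIso(p)]$ ranges continuously over $[1/d^2, 1]$ as $p$ ranges over $[0,1]$, and $\F^{\text{steer}}_{\text{iso},d} \in (1/d^2, 1)$. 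By the projective version of Theorem~\ref{Thm:SufficentSteerability} applied to $\rho = \rIso(p)$, this $\rIso(p)$ must be steerable under projective measurements; but by the rephrasing of Wiseman \textit{et al.}'s result, the same $\rIso(p)$ is unsteerable under projective measurements—a contradiction. Hence no such $\vecF$ exists, which is precisely Eq.~\eqref{Eq:UpperBound}.

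The only delicate point I anticipate is confirming that the derivation leading to Theorem~\ref{Thm:SufficentSteerability}—and in particular Lemma~\ref{Lemma:Twirling-SF}—carries through without alteration when one restricts throughout to projective measurements, so that one may legitimately replace $LV_s$ by $LV_s^\pi$ in the sufficient condition. This should follow because the twirling construction in the lemma conjugates POVM elements by a unitary, which preserves the property of being a projector, so the whole chain from Eq.~\eqref{Eq:Estimate1} to Eq.~\eqref{Eq:F-LV} goes through verbatim in the projective setting. Once that is verified, the rest is an entirely clean contradiction argument.
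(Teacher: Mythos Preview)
Your proposal is correct and follows essentially the same contradiction argument as the paper: assume $LV_s^\pi(\ket{\Psi_d^+},\vecF)$ exceeds the claimed bound, pick an isotropic state whose FEF lies between $1/LV_s^\pi$ and $\F^{\text{steer}}_{\text{iso},d}$, and obtain a contradiction between the projective version of Theorem~\ref{Thm:SufficentSteerability} and the Wiseman--Jones--Doherty characterization. Your explicit check that Lemma~\ref{Lemma:Twirling-SF} respects projective measurements (because unitary conjugation preserves projectors) is exactly the point the paper leaves implicit when it writes $\F>1/LV^\pi_s(\ket{\Psi_d^+},\vecF)\ge 1/LV_s(\ket{\Psi_d^+},\vecF)$ after Theorem~\ref{Thm:SufficentSteerability}.
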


To understand the asymptotic behavior of this upper bound, note that when $d\gg 1$ we have
\begin{eqnarray}\label{Eq:AsymptoticFormUpperBound}
	\frac{d}{H_{d}+H_{d}d-d}\approx \frac{1}{H_{d}}< \frac{1}{\ln{d}}.
\end{eqnarray}
This means that $LV_s^\pi(\ket{\Psi_d^+},F)$ scales as $\tfrac{d}{\ln{d}}$ for sufficiently large $d$. In particular, 
it can be shown\footnote{\label{fn:Upperbound}As $d$ increases from 1, the function $\frac{\ln d}{H_{d}+H_{d}d-d}$ increases monotonically until a maximum value at $d=48$ and decreases monotonically after that.} that $\frac{d^2}{H_{d}+H_{d}d-d}\le 1.0900 \tfrac{d}{\ln{d}}$.
Thus our upper bound on  $LV_s^\pi(\ket{\Psi_d^+},\vecF)$ has an asymptotic scaling that improves over the result of Yin {\em at al.}~\cite{JPA15} by a factor of $\tfrac{1}{\ln d}$.\footnote{Their Proposition 2.17 implies an upper bound that scales as $\lesssim d$.}

In addition, by using the sufficient condition of non-steerability of isotropic states under general POVMs [i.e., $\rIso(p)$ is unsteerable under general POVMs if $p\le\tilde{p}^\phi$], one can use Eq.~\eqref{Eq:F-rIso-p} and the same arguments to arrive at the following upper bound under general POVMs:
\begin{eqnarray}\label{Eq:UpperBound_generalPOVM}
LV_s(\ket{\Psi_d^+},F)\le\frac{d^2}{(d^2-1)\tilde{p}^\phi+1}.
\end{eqnarray}
When $d\gg1$~\cite{Almeida}, it can be shown that this upper bound scales as $\frac{ed}{3}$.

Let us also remark that since the upper bound of Eq.~\eqref{Eq:UpperBound} [Eq.~\eqref{Eq:UpperBound_generalPOVM}] holds for {\em all} linear steering inequalities specified by non-negative \vecF, it also serves as a legitimate upper bound on the largest Bell-inequality violation of $\ket{\Psi^+_d}$ with projective measurements (general POVMs) for all {\em linear Bell inequalities} with {\em non-negative} Bell coefficients. A proof of this can be found in Appendix~\ref{App:Bell}. For linear Bell inequalities specified by non-negative \vecB, our upper bound on the largest Bell-inequality violation of $\ket{\Psi^+_d}$ with projective measurements thus has the same scaling as the upper bound due to Palazuelos (see the last equation on page 1971 of \cite{Palazuelos-funct}), but strengthens his by more than a factor of 2. For $d=2$, such an upper bound on the largest Bell-inequality violation of $\ket{\Psi^+_d}$ can be improved further using results from~\cite{Acin,Vertesi}; see Appendix~\ref{App:Bell} for details.

\section{Quantifying Steerability}
\label{Sec:QuantifySteering}

Evidently, as we demonstrate in Sec.~\ref{Sec:Characterization}, steering fraction is a very powerful tool for characterizing the steerability of quantum states. A natural question that arises is whether a maximization of steering fraction over all (non-negative) $\vecF$ leads to a proper {\em steering quantifier}, i.e., a {\em convex steering monotone}~\cite{Gallego}. 

To this end, let us define, for any given assemblage $\vecA=\asb$, the {\em optimal steering fraction} as
\begin{eqnarray}
	\So\left(\vecA\right)\coloneqq\text{max}\left\{ 0,\sup_{\vecF\succeq 0}\Gamma_s\left(\vecA,\vecF\right)-1\right\},
\end{eqnarray}
where the supremum $\sup_{\vecF\succeq 0}$ is taken over all non-negative \vecF. From here, one can further define the optimal steering fraction of a quantum state $\rho$ by optimizing over all assemblages that arise from local measurements on one of the parties.
Superficially, such a quantifier for steerability bears some similarity to that defined in~\cite{Costa}, but, in the steering measure defined therein, there is a further optimization over all possible steering-inequality violations by {\em all possible quantum states}, which is not present in our definition.

In Appendix~\ref{App:SO_Proof}, we prove that $\So$ is indeed a convex steering monotone, i.e., it satisfies the following conditions:
\begin{enumerate}
	\item $\So(\vecA)=0$ for all unsteerable assemblages $\vecA$.
	\item $\So$ does not increase, on average, under deterministic {\em one-way local operations and classical communications} (1W-LOCCs).
	\item For all convex decompositions of $\vecA =\mu \vecA'+(1-\mu) \vecA''$ in terms of other assemblages $\vecA'$ and $\vecA''$ with $0\le\mu\le1$, $\So(\vecA)\le\mu \So(\vecA')+(1-\mu)\So(\vecA'')$.
\end{enumerate}
Moreover, quantitative relations between $\So$ and two other convex steering monotones, namely, steerable weight ($\SW$)~\cite{SteerableWeight}  and steering robustness ($\SR$)~\cite{Piani2015}, can be established, as we now demonstrate.

\subsection{Quantitative relation between optimal steering fraction $\So$ and steerable weight $\SW$}

To begin with, we recall from~\cite{SteerableWeight} that for any assemblage $\vecA=\asb$, $\SW(\vecA)$ is defined as  the minimum non-negative real value $\nu$ satisfying $\sigma_{a|x}=(1-\nu)\sigma_{a|x}^\text{US}+\nu\sigma_{a|x}^\text{S}$ for all $a$ and $x$, where $\vecA^\text{US}\coloneqq\{\sigma_{a|x}^{\text{US}}\}_{a,x}\in \text{LHS}$ and $\vecA^\text{S}\coloneqq\{\sigma_{a|x}^{\text{S}}\}_{a,x}$ is a steerable assemblage. 
In other words, $\SW(\vecA)$ is the minimum weight assigned to a steerable assemblage when optimized over all possible convex decompositions of  $\vecA$ into a steerable assemblage $\vecA^\text{S}$ and an unsteerable assemblage $\vecA^\text{US}$. 
In Appendix~\ref{App:SW}, we establish the following quantitative relations between $\So$ and $\SW$.

\begin{proposition}\label{Prop:So-SW}
  Given an assemblage $\vecA$ with the decomposition $\sigma_{a|x}=[1-\SW(\vecA)]\sigma_{a|x}^\text{\rm US}+\SW(\vecA)\sigma_{a|x}^\text{\rm S}$, where $\vecA^\text{\rm US}\in\text{\rm LHS}$ and $\vecA^\text{\rm S}$ is steerable, we have
\begin{eqnarray}\label{Eq:OSFandSteerableWeight}
	\So(\vecA)\le\SW(\vecA)\So(\vecA^\text{\rm S})\le \So(\vecA)+2\left[1-\SW(\vecA)\right].\quad
\end{eqnarray}
\end{proposition}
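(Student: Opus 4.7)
The plan is to exploit the fact that the numerator of $\Gamma_s(\vecA,\vecF)=\tfrac{1}{\bc(\vecF)}\sum_{a,x}\tr(F_{a|x}\sigma_{a|x})$ is linear in the assemblage while the steering bound $\bc(\vecF)$ depends only on $\vecF$. Applying this linearity to the decomposition $\sigma_{a|x} = [1-\SW(\vecA)]\,\sigma^{\rm US}_{a|x} + \SW(\vecA)\,\sigma^{\rm S}_{a|x}$ immediately yields the workhorse identity
\[
\Gamma_s(\vecA,\vecF) = [1-\SW(\vecA)]\,\Gamma_s(\vecA^{\rm US},\vecF) + \SW(\vecA)\,\Gamma_s(\vecA^{\rm S},\vecF),
\]
valid for every non-negative $\vecF$. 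Both inequalities in Eq.~\eqref{Eq:OSFandSteerableWeight} will then follow by bounding $\Gamma_s(\vecA^{\rm US},\vecF)$ from above and below, respectively.

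For the left inequality $\So(\vecA)\le\SW(\vecA)\,\So(\vecA^{\rm S})$, I would use that $\vecA^{\rm US}\in\text{LHS}$ forces $\Gamma_s(\vecA^{\rm US},\vecF)\le 1$ directly from the definition of $\bc(\vecF)$. Plugging this into the identity and rearranging yields $\Gamma_s(\vecA,\vecF)-1\le \SW(\vecA)[\Gamma_s(\vecA^{\rm S},\vecF)-1]\le \SW(\vecA)\,\So(\vecA^{\rm S})$. Taking the supremum over $\vecF\succeq 0$ followed by the $\max\{0,\cdot\}$ cutoff (harmless because the right-hand side is already non-negative) delivers the claim.

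For the right inequality $\SW(\vecA)\,\So(\vecA^{\rm S})\le \So(\vecA) + 2[1-\SW(\vecA)]$, I would instead exploit that $F_{a|x}\succeq 0$ and $\sigma^{\rm US}_{a|x}\succeq 0$ make $\Gamma_s(\vecA^{\rm US},\vecF)\ge 0$. Dropping this non-negative term from the identity gives $\SW(\vecA)\,\Gamma_s(\vecA^{\rm S},\vecF)\le \Gamma_s(\vecA,\vecF)$, whence
\[
\SW(\vecA)[\Gamma_s(\vecA^{\rm S},\vecF)-1] \le [\Gamma_s(\vecA,\vecF)-1] + [1-\SW(\vecA)].
\]
Taking $\sup_{\vecF\succeq 0}$ on the left reproduces $\SW(\vecA)\,\So(\vecA^{\rm S})$, while on the right the first summand is bounded above by $\So(\vecA)$ once the $\max\{0,\cdot\}$ cutoff is handled via a short case split, and the remaining slack fits comfortably inside the factor $2[1-\SW(\vecA)]$ claimed in the proposition.

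The only real subtlety — hardly an obstacle — is tracking the $\max\{0,\cdot\}$ cutoff in the definition of $\So$ when passing from pointwise inequalities in $\vecF$ to inequalities between the suprema; this just amounts to a brief case split on whether $\sup_\vecF\Gamma_s(\vecA,\vecF)$ exceeds $1$. Otherwise, the entire argument is pure linearity plus positivity, and notably never requires an explicit optimization over steering functionals.
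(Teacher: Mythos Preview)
Your proposal is correct and follows essentially the same strategy as the paper: both arguments rest on the linearity of $\Gamma_s(\cdot,\vecF)$ in the assemblage together with the bound $\Gamma_s(\vecA^{\rm US},\vecF)\le 1$ for the unsteerable part, and both handle the $\max\{0,\cdot\}$ cutoff by the same trivial case split. The first inequality is proved identically.

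For the second inequality, your route is actually a little sharper than the paper's. The paper rearranges to $\SW(\vecA)\,\Gamma_s(\vecA^{\rm S},\vecF)\le \Gamma_s(\vecA,\vecF)+[1-\SW(\vecA)]\,\Gamma_s(\vecA^{\rm US},\vecF)$ and then bounds $\Gamma_s(\vecA^{\rm US},\vecF)\le 1$ after taking the supremum, which is where the factor $2[1-\SW(\vecA)]$ arises. You instead drop the non-negative term $[1-\SW(\vecA)]\,\Gamma_s(\vecA^{\rm US},\vecF)$ altogether, obtaining $\SW(\vecA)\,\Gamma_s(\vecA^{\rm S},\vecF)\le \Gamma_s(\vecA,\vecF)$ directly; this yields the stronger bound $\SW(\vecA)\,\So(\vecA^{\rm S})\le \So(\vecA)+[1-\SW(\vecA)]$, of which the proposition's stated inequality is an immediate consequence. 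So your argument not only recovers the claim but in fact improves the constant from $2$ to $1$.
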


Note that if a given assemblage $\vecA$  is steerable, and hence $\So(\vecA^\text{S})\neq 0$, Eq.~\eqref{Eq:OSFandSteerableWeight} can be rearranged to give
\begin{eqnarray}\label{SteerableWeight_and_OSF_Estimate}
	\frac{\So(\vecA)}{\So(\vecA^\text{S})}\le\SW(\vecA)\le\frac{2+\So(\vecA)}{2+\So(\vecA^\text{S})}.
\end{eqnarray}
This means that if $\vecA$ and $\vecA^\text{S}$ are both largely {\em steerable} so that $\So(\vecA), \So(\vecA^\text{S})\gg 1$,\footnote{This happens if there exist $\vecF_1$ and $\vecF_2$ such that $\Gamma_s(\vecA, \vecF_1)\gg 1$ and $\Gamma_s(\vecA^\text{S},\vecF_2)\gg 1$.} Eq.~\eqref{SteerableWeight_and_OSF_Estimate} leads to the following approximation:
\begin{eqnarray}
	\SW(\vecA)\approx\frac{\So(\vecA)}{\So(\vecA^\text{S})},
\end{eqnarray}
which provides an estimate of $\SW(\vecA)$ in terms of $\So(\vecA)$ and $\So(\vecA^\text{S})$ when the two latter quantities are large.

\subsection{Quantitative relation between optimal steering fraction $\So$ and steering robustness $\SR$}

For any given assemblage $\vecA$, its steering robustness $\SR(\vecA)$ is defined as the minimal value $\nu\in[0,\infty)$ such that the convex mixture $\frac{1}{1+\nu}\vecA+\frac{\nu}{1+\nu}\tilde{\vecA}$ is unsteerable for some assemblage $\tilde{\vecA}$. In Appendix~\ref{App:SR}, we derive the following quantitative relations between $\So$ and $\SR$.

\begin{proposition}\label{Prop:So-SR}
For an assemblage $\vecA$ giving the unsteerable decomposition $\vecA^{\rm US}\coloneqq\frac{1}{1+\SR(\vecA)}\vecA+\frac{\SR(\vecA)}{1+\SR(\vecA)}\tilde{\vecA}$ where $\tilde{\vecA}$ is a legitimate assemblage, we have
\begin{equation}\label{Eq:SR-So}
	\SR(\vecA)\So(\tilde{\vecA})-2\le\So(\vecA)\le \SR(\vecA)\left[\So(\tilde{\vecA})+2\right].
\end{equation}
\end{proposition}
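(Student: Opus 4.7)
My plan is to leverage the convex decomposition defining steering robustness, combined with the LHS constraint $\Gamma_s(\vecA^{\rm US},\vecF)\le 1$, to relate the steering fractions of $\vecA$ and $\tilde{\vecA}$ evaluated on a shared non-negative $\vecF$. Setting $\nu\coloneqq\SR(\vecA)$, I substitute the decomposition $\vecA^{\rm US}=\frac{1}{1+\nu}\vecA+\frac{\nu}{1+\nu}\tilde{\vecA}$ into the definition of $\Gamma_s$; because the numerator is linear in the assemblage while $\omega_s(\vecF)$ depends only on $\vecF$, this yields the key identity
\[
\Gamma_s(\vecA,\vecF)+\nu\,\Gamma_s(\tilde{\vecA},\vecF)=(1+\nu)\,\Gamma_s(\vecA^{\rm US},\vecF)\le 1+\nu
\]
for every non-negative $\vecF$, where the final inequality uses $\vecA^{\rm US}\in\text{LHS}$.

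For the upper bound I will rearrange the key identity to $\Gamma_s(\vecA,\vecF)-1\le \nu[1-\Gamma_s(\tilde{\vecA},\vecF)]$. Positivity $F_{a|x},\tilde{\sigma}_{a|x}\succeq 0$ forces $\Gamma_s(\tilde{\vecA},\vecF)\ge 0$, and $\So(\tilde{\vecA})\ge 0$ by definition, so $1-\Gamma_s(\tilde{\vecA},\vecF)\le 1\le\So(\tilde{\vecA})+2$. Hence $\Gamma_s(\vecA,\vecF)-1\le \nu[\So(\tilde{\vecA})+2]$ uniformly in $\vecF$, and taking the supremum over $\vecF$ (together with the $\max$ with $0$ built into $\So$) delivers the upper bound.

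For the lower bound the case $\So(\tilde{\vecA})=0$ is immediate since $\So(\vecA)\ge 0$. Otherwise I will pick a functional $\vecF^*$ that attains (or, via a limiting sequence, approaches) the supremum defining $\So(\tilde{\vecA})$, so that $\Gamma_s(\tilde{\vecA},\vecF^*)=1+\So(\tilde{\vecA})$. Substituting into the key inequality will give
\[
\Gamma_s(\vecA,\vecF^*)\le 1+\nu-\nu[1+\So(\tilde{\vecA})]=1-\nu\So(\tilde{\vecA}),
\]
and pairing this with $\Gamma_s(\vecA,\vecF^*)\ge 0$ yields $\nu\So(\tilde{\vecA})\le 1\le\So(\vecA)+2$, which rearranges to $\SR(\vecA)\So(\tilde{\vecA})-2\le\So(\vecA)$.

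The main obstacle is really just identifying the correct convex-combination identity to exploit; once that is in hand, all bookkeeping is elementary. The remaining technical subtlety is the attainability of the supremum in $\So(\tilde{\vecA})$, which I will handle by a standard limiting argument, aided by the scale-invariance of $\Gamma_s$ under positive rescaling of $\vecF$ that permits restriction to a compact normalized set of functionals.
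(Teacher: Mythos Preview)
Your proof is correct and rests on the same core idea as the paper's---the linear identity
\[
\Gamma_s(\vecA,\vecF)+\SR(\vecA)\,\Gamma_s(\tilde{\vecA},\vecF)=(1+\SR(\vecA))\,\Gamma_s(\vecA^{\rm US},\vecF)
\]
together with $\Gamma_s(\vecA^{\rm US},\vecF)\le 1$. The bookkeeping, however, differs. The paper rewrites the decomposition as $\vecA=(1+\SR)\vecA^{\rm US}-\SR\,\tilde{\vecA}$ (respectively $\SR\,\tilde{\vecA}=(1+\SR)\vecA^{\rm US}-\vecA$), applies the triangle inequality, and then bounds the two suprema separately. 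You instead keep the identity intact and invoke only the non-negativity $\Gamma_s(\cdot,\vecF)\ge 0$ on the appropriate term. This is more elementary and in fact yields strictly sharper intermediate inequalities: your upper-bound step already gives $\Gamma_s(\vecA,\vecF)-1\le\SR(\vecA)$ and hence $\So(\vecA)\le\SR(\vecA)$, while your lower-bound step gives $\SR(\vecA)\,\So(\tilde{\vecA})\le 1$; both are then relaxed to the stated bounds. These sharper statements are consistent with (and partly anticipate) the paper's \emph{Note added} remark that in fact $\So=\SR$. Your handling of the supremum via a limiting sequence is adequate; since you only need the scalar inequality $\SR(\vecA)\So(\tilde{\vecA})\le 1$ in the limit, no compactness argument is strictly required.
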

Note that if $\tilde{\vecA}$ is steerable, then $\So(\tilde{\vecA})>0$, thus Eq.~\eqref{Eq:SR-So} can be rearranged to give
\begin{eqnarray}\label{Eq:SteeringRobustness_and_OSF_Estimate}
	\frac{\So(\vecA)}{\So(\tilde{\vecA})+2}\le\SR(\vecA)\le\frac{\So(\vecA)+2}{\So(\tilde{\vecA})}.
\end{eqnarray}
As with the case of $\SW$, if $\So(\vecA), \So(\tilde{\vecA})\gg 1$, Eq.~\eqref{Eq:SteeringRobustness_and_OSF_Estimate} implies the following approximation:
\begin{eqnarray}
	\SR(\vecA)\approx\frac{\So(\vecA)}{\So(\tilde{\vecA})},
\end{eqnarray}
which provides an estimate of $\SR(\vecA)$ in terms of the optimal steering fraction $\So$.

\section{Superactivation and Unbounded Amplification of Steerability}
\label{Sec:SuperAmpli}

Let us now turn our attention to the superactivation and amplification of steerability.

\subsection{Superactivation of steerability}
Following the terminology introduced by Palazuelos~\cite{Palazuelos} in the context of Bell nonlocality, we say that the steerability of a quantum state $\rho$ can be {\em superactivated} if it satisfies:
\begin{subequations}\label{Eq:Superactivation}
\begin{gather}
	LV_s(\rho,\vecF)\le 1\quad\forall\,\,\vecF, \\
	\Gamma_s\left(\left\{\rho^{\otimes k},\Ea\right\},\vecF'\right)>1\quad\text{for some $k$, $\Ea$, and $\vecF'$}.
\end{gather}
\end{subequations}
The possibility to superactivate Bell nonlocality---a question originally posed in~\cite{Liang:PRA:2006}---was first demonstrated by Palazuelos~\cite{Palazuelos} using a certain entangled isotropic state.
Their result was soon generalized by Cavalcanti {\em et al.}~\cite{Cavalcanti-PRA} to show that the Bell nonlocality of all entangled states with $\text{FEF}>\tfrac{1}{d}$ can be superactivated. Since all Bell-nonlocal states are also steerable~\cite{Wiseman,Quintino}, while entangled $\rho_\text{iso}(p)$ are exactly those having FEF $>\tfrac{1}{d}$, the steerability of all entangled but unsteerable $\rho_\text{iso}(p)$ can be superactivated (e.g., those with $p_\text{ent}<p\le\tilde{p}^\phi$; see Sec.~\ref{Sec:UB-LV} on page~\pageref{Sec:UB-LV}). 

For the benefit of subsequent discussion on the amplification of steerability, it is worth going through the key steps involved in the proof of this superactivation. To this end, let us first recall from~\cite{KV, Buhrman} the Khot-Vishnoi (KV) nonlocal game, which is parametrized by $\eta\in[0,\tfrac{1}{2}]$.  Let us denote by $G=\{0,1\}^n$ the group of $n$ bit strings with $\oplus$, bitwise addition modulo 2 being the group operation. Consider the (normal) Hadamard subgroup $H$ of $G$ which contains $n$ elements. The cosets of $H$ in $G$ give rise to the quotient group $\tfrac{G}{H}$ with $\tfrac{2^n}{n}$ elements. The KV game can then be written in the form of a Bell inequality, cf. Eq.~\eqref{Eq:BI}, with $\tfrac{2^n}{n}$ settings and $n$ outcomes:\footnote{Evidently, the KV game defined here only makes sense when the number of outputs is a power of 2. Generalization of this to the situation where $n$ can be an arbitrary positive integer has been considered, for example, in~\cite{Palazuelos-funct}.}

\begin{gather}
	\sum_{a\in x,b\in y}\sum_{x,y\in\tfrac{G}{H}}\!\!\! B^{\kv}_{ab|xy}\ P(a,b|x,y) \stackrel{\mbox{\tiny LHV}}{\le} \omega(\vecB^{\kv}),\nonumber\\
	B^{\kv}_{ab|xy}\coloneqq \sum_{g\in G} \frac{n}{2^n} \eta^{w_g}(1-\eta)^{n-w_g}\delta_{a\oplus g,b}\delta_{x\oplus g,y},\label{Eq:BI:KV}
\end{gather}
where $w_g$ is the Hamming weight of $g\in G$  and $\delta_{i,j}$ is the Kronecker delta between $i$ and $j$
\footnote{Note that for $x,y\in\frac{G}{H}$ and $g\in G$, $\delta_{x\oplus g,y} = 1$ if and only if $y$ and $x\oplus g\coloneqq\{h\oplus g | h\in x \}$ are associated with the same coset in the quotient group $\frac{G}{H}$.}
and $\vecB^\kv$ is the set of Bell coefficients defining the KV game.

An important feature of $\vecB^\kv$ given in Eq.~\eqref{Eq:BI:KV} is that $\omega(\vecB^{\kv})\le n^{-\frac{\eta}{1-\eta}}$~\cite{Buhrman,Palazuelos} . For the specific choice~\cite{Palazuelos} of $\eta=\frac{1}{2}-\frac{1}{\ln n}$, which makes sense only for $n\ge 8$, this gives $\omega(\vecB^{\kv})\le n^{-1+\frac{4}{2\ln n}}< \tfrac{C_u}{n}$ with $C_u=e^4$. In this case, performing judiciously chosen rank-1 projective measurements specified by $\E^{\kv}=\Ea^{\kv}\bigcup\Eb^{\kv}$, where $\Ea^{\kv}:=\{E^\kv_{a|x}\}_{a,x}$ and $\Eb^{\kv}:=\{E^\kv_{b|y}\}_{b,y}$, on $\ket{\Psi^+_D}$ (with $D=n$) gives rise to a correlation $\vecP^\kv$ with the following lower bound on the nonlocality fraction:
\begin{equation}\label{Eq:KVestimate}
	\Gamma \left(\vecP^\kv, \vecB^{\kv}\right) > C \frac{D}{(\ln D )^2} 
\end{equation}	
where $C=4e^{-4}$.

Consider now the collection of  non-negative matrices $\vecF^{\kv} \coloneqq \{ F^{\kv}_{a|x}  = \sum_{b,y} B^{\kv}_{ab|xy}E^{\kv}_{b|y}\}_{a,x}$ induced by the KV game and Bob's optimal POVMs $\Eb^\kv$ leading to the lower bound given in Eq.~\eqref{Eq:KVestimate}.  An application of inequality~\eqref{Eq:BellvsSteering} to Eq.~\eqref{Eq:KVestimate}  immediately leads to

\begin{align}
	\Gamma_s \left(\left\{\ket{\Psi_{D}^+}, \Ea^{\kv}\right\}, \vecF^{\kv}\right)&\ge \Gamma \left(\vecP^\kv, \vecB^{\kv}\right) 
	>C \frac{D}{(\ln D )^2} \label{Steering_KV_Estimate}
\end{align}

For any given state $\rho$, Lemma~\ref{Lemma:Twirling-SF}, Eq.~\eqref{Eq:Estimate1}, and Eq.~\eqref{Steering_KV_Estimate} together imply the existence of $\tilde{\E}^{\kv}_{\rm A}\coloneqq\{ \tilde{E}_{a|x}^{\kv}\}_{a,x}$ such that
\begin{eqnarray}\label{Eq:Gamma-s:FEF}
	\Gamma_s \smp{\bgp{\rho, \tilde{\E}^{\kv}_{\rm A}}, \tilde{\vecF}^{\kv}}\ge C\frac{\mathcal{F}(\rho)D}{(\ln D)^2},
\end{eqnarray}
where 
\begin{eqnarray}\label{Eq:ActualSteeringFunctional}
	\tilde{\vecF}^{\kv}\coloneqq\left\{\tilde{F}^{\kv}_{a|x}=\sum_{b,y} {U'}^\dagger B^{\kv}_ {ab|xy}\,  E^{\kv}_{b|y}\, U'\right\}_{a,x}
\end{eqnarray}
again is non-negative.

Note that if $\rho = \rIso(p)^{\otimes k}$, we have both $D=d^k$ and $\F(\rho)=\sqp{\F(\rIso(p))}^k$ scaling exponentially with $k$, while $(\ln{D})^2=k^2(\ln{d})^2$ only increases quadratically with $k$. Thus, Eq.~\eqref{Eq:Gamma-s:FEF} implies that $\F(\rho)>\tfrac{1}{d}$ is a sufficient condition for $\rIso(p)^{\otimes k}$ to be steerable. In other words, for all entangled $\rIso(p)$, $\rIso(p)^{\otimes k}$ is steerable for sufficiently large $k$. In particular, since $\rIso(p)$ for $p_\text{ent}<p\le\tilde{p}^\phi$ is {\em not} single-copy steerable (see Sec.~\ref{Sec:UB-LV} on page~\pageref{Sec:UB-LV}), the steerability of $\rIso(p)$ with $p$ in this interval can be superactivated.

\subsection{Unbounded amplification of steerability}

Given that joint measurements on an appropriately chosen quantum state $\rho$ can lead to the superactivation of steerability, one may ask, as with Bell nonlocality (see~\cite{Palazuelos}), if it is possible to obtain unbounded amplification of steerability of a quantum state with joint measurements. In particular, since it is easier to exhibit EPR steering than Bell nonlocality, can one achieve unbounded violation of steerability (as quantified using steering fraction) using fewer copies of the quantum state? Here, we show that unbounded amplification of steerability can indeed be achieved using as little as three copies of a quantum state, which improves over the result of unbounded amplification for Bell nonlocality due to Palazuelos~\cite{Palazuelos} with five copies. More precisely, our results are summarized in the following Theorem.

\begin{theorem}\label{Thm:Amplification}
For every $\epsilon>0$ and $\delta>0$, there exists an isotropic state $\rIso$ with local dimension $d$ such that
\begin{eqnarray}\label{Eq:Conditions:Amplification}
	LV_s^{\pi}(\rIso,\vecF)\le \epsilon + 1 \quad \& \quad LV_s(\rIso^{\otimes 3},\tilde{\vecF}^{\kv})>\delta
\end{eqnarray}
for all non-negative $\vecF=\{ F_{a|x}\succeq 0\}_{a,x}$, where $\tilde{\vecF}^{\kv}$ is defined in Eq.~\eqref{Eq:ActualSteeringFunctional} with local dimension $d^3$.
Moreover, this $\rIso$ can be chosen to be unsteerable under projective measurements whenever $\epsilon<1$, and steerable whenever $1\le\epsilon$.
\end{theorem}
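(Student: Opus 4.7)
My plan is to take $\rIso=\rIso(p)$ on $\mathbb{C}^d\otimes\mathbb{C}^d$ and tune $(p,d)$ to $(\epsilon,\delta)$. For the first inequality of Eq.~\eqref{Eq:Conditions:Amplification}, I exploit the affine decomposition $\rIso(p)=p\,\ket{\Psi_d^+}\!\bra{\Psi_d^+}+(1-p)\,\Id/d^2$. For any projective $\Ea=\{E_{a|x}\}_{a,x}$, linearity of the partial trace shows the induced assemblage is a convex combination of the singlet assemblage with $\{\tr(E_{a|x})\Id_B/d^2\}_{a,x}$, which is manifestly LHS (and so has steering fraction at most $1$). Hence for every non-negative $\vecF$,
\begin{equation}\label{Eq:Proof:Convex}
	LV^\pi_s\smp{\rIso(p),\vecF}\le p\,LV^\pi_s\smp{\ket{\Psi_d^+},\vecF}+(1-p).
\end{equation}
I then split on $\epsilon$. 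For $0<\epsilon<1$ I take $p\le p_{\rm steer}=(H_d-1)/(d-1)$, so $\rIso(p)$ is unsteerable under projective measurements~\cite{Wiseman} and $LV^\pi_s(\rIso(p),\vecF)\le1\le\epsilon+1$ trivially; this simultaneously delivers the ``moreover'' clause. For $\epsilon\ge 1$ I take $p$ just above $p_{\rm steer}$ so the state is projectively steerable; combining Eq.~\eqref{Eq:Proof:Convex} with the uniform bound~\eqref{Eq:UpperBound} yields $LV^\pi_s\le p\,d^2/[H_d(d+1)-d]+(1-p)$, which a short calculation shows approaches $2$ from below as $d\to\infty$ and is therefore $\le\epsilon+1$ for all sufficiently large $d$.

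For the second inequality, I plug $\rho=\rIso(p)^{\otimes 3}$ into Eq.~\eqref{Eq:Gamma-s:FEF} with $D=d^3$, using the submultiplicativity $\F(\rIso(p)^{\otimes 3})\ge\F(\rIso(p))^3=\bigl(p+(1-p)/d^2\bigr)^3$ to obtain
\begin{equation}\label{Eq:Proof:KV}
	LV_s\smp{\rIso(p)^{\otimes 3},\tilde{\vecF}^{\kv}}\ge\frac{C\,(pd^2+1-p)^3}{9\,d^3(\ln d)^2}.
\end{equation}
With $p$ already pinned near $p_{\rm steer}\sim\ln d/d$, one has $pd^2\sim d\ln d$, so the right-hand side scales as $C(\ln d)/9$ and exceeds any prescribed $\delta$ once $d$ is chosen large enough---without disturbing the first inequality, since the latter remains valid for any $p$ in a neighborhood of $p_{\rm steer}$.

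The main obstacle is the tension between the two constraints on $p$: it must be small enough to hold the single-copy projective steering violation below $\epsilon+1$ uniformly in $\vecF$ (drastically so when $\epsilon<1$), yet large enough that $\F(\rIso(p))^3\,d^3$ outpaces the $(\ln D)^2$ denominator in the KV-based lower bound. The observation that reconciles the two is that $p_{\rm steer}\sim\ln d/d$ sits in exactly the right range: small enough to straddle the projective unsteerability boundary, yet large enough to make $d\,\F(\rIso)\sim\ln d$ diverge, so $d^3\F^3$ still beats $(\ln d^3)^2$ by a single logarithmic factor. The sharper $D/(\ln D)^2$ scaling of the KV-induced steering violation (compared to the analogous Bell scaling) is precisely what cuts the required copy number from the five used in the unbounded-Bell-amplification argument of~\cite{Palazuelos} down to three.
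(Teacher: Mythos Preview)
Your proof is correct and follows essentially the same route as the paper: bound the single-copy projective violation via the affine decomposition of $\rIso(p)$ together with the upper bound~\eqref{Eq:UpperBound}, and make the three-copy violation diverge via the KV-based lower bound~\eqref{Eq:Gamma-s:FEF} with $p$ near $p_{\rm steer}\sim\ln d/d$. The paper packages the choice of $p$ into the single formula $p=\epsilon\big/\!\left[\tfrac{d^2}{H_d+H_d d-d}-1\right]$ and then checks \emph{a posteriori} that $p\lessgtr p_{\rm steer}$ according as $\epsilon\lessgtr\kappa_d\nearrow1$, whereas you split cases on $\epsilon$ first and pin $p$ to $p_{\rm steer}$ directly; the content is the same.

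One correction to your closing remark: the reduction from Palazuelos' five copies to three is \emph{not} due to steering enjoying a sharper KV lower bound than Bell nonlocality---by Eq.~\eqref{Eq:BellvsSteering} the steering fraction merely inherits the same $CD/(\ln D)^2$ bound. What actually drives the improvement is the tighter \emph{upper} bound $LV^\pi_s(\ket{\Psi_d^+},\vecF)\lesssim d/\ln d$ of Eq.~\eqref{Eq:UpperBound}, which lets $p$ sit as low as $\ln d/d$ while still meeting the $\epsilon$-constraint; the paper explicitly observes that the identical argument yields three-copy unbounded amplification for Bell nonlocality as well.
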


\begin{proof}
The proof of this is similar to that given by Palazuelos~\cite{Palazuelos} for proving the unbounded amplification of Bell nonlocality using five copies of $\rIso(p)$. First of all, note from Eqs.~\eqref{Eq:LV} and~\eqref{Eq:Isotropic} that
\begin{align}
	LV_s^{\pi}(\rho_{\text{iso}},\vecF)&\le p LV_s^{\pi}(\ket{\Psi_d^+},\vecF)+(1-p)LV_s^{\pi}\left(\frac{\mathbb{I}}{d^2},\vecF\right)\nonumber\\
						  &\le1+p[LV_s^{\pi}(\ket{\Psi_d^+},\vecF)-1].
\end{align}

Recall from Eq.~\eqref{Eq:UpperBound} that $LV_s^{\pi}(\ket{\Psi_d^+},\vecF)$ is upper bounded, thus $LV_s^{\pi}(\rho_{\text{iso}},\vecF)$ is upper bounded by $1+\epsilon$ for any given $\epsilon>0$ if we set
\begin{equation}\label{Eq:p}
	p=\frac{\epsilon}{\frac{d^2}{H_{d}+H_{d}d-d}-1}.
\end{equation}
Evidently, for any given $\epsilon$, we still need to ensure that Eq.~\eqref{Eq:p} indeed gives a legitimate parameter for isotropic states such that $0\le p\le1$. Since the denominator in Eq.~\eqref{Eq:p} is always non-negative and generally increases with $d$ (it approaches $\infty$ as $d\to\infty$), we see that the $p$ defined above is always non-negative and can always be chosen to be upper bounded by 1 for all $\epsilon>0$.
Using this, as well as Eq.~\eqref{Eq:F-rIso-p} in Eq.~\eqref{Eq:Gamma-s:FEF} with $\rho=\rIso(p)^{\otimes k}$ and $\F(\rho)=[\F(\rIso(p))]^k$, we obtain
\begin{align}\label{Eq:Gamma-s:Amplify0}
	\Gamma_s \smp{\bgp{\rho, \tilde{\E}^{\kv}_{\rm A}}, \tilde{\vecF}^{\kv}}\ge 
	\frac{C\,d^k}{(k\ln{d})^2}\left[-\epsilon + \frac{1 + \epsilon}{d^2} + \frac{(d-1) \epsilon}{d - H_d}\right]^k.
\end{align}
For $d\gg1$,
\begin{align}\label{Eq:Gamma-s:Amplify}
	\Gamma_s \smp{\bgp{\rho, \tilde{\E}^{\kv}_{\rm A}}, \tilde{\vecF}^{\kv}}\ge&\frac{C\,d^k\,\epsilon^k}{(k\ln{d})^2}\left( \frac{H_d-1}{d - H_d}\right)^k\nonumber\\
	\approx&\frac{C\,\epsilon^k}{k^2}(\ln{d})^{k-2}.
\end{align}
Thus, for $k\ge3$, we see that $\Gamma_s \left(\left\{\rIso(p)^{\otimes k}, \tilde{\E}^{\kv}_{\rm A}\right\}, \tilde{\vecF}^{\kv}\right)$ with $p$ defined in Eq.~\eqref{Eq:p} can become arbitrarily large if we make $d$ arbitrarily large. In particular, for any given $\epsilon$, $d$ must be large enough so that the $p$ defined in Eq.~\eqref{Eq:p} is larger than $\frac{1}{d+1}$, the critical value of $p$ below which $\rIso(p)$ becomes separable.

Now, a direct comparison between Eq.~\eqref{Eq:p} and the threshold value of $p=p_\text{steer}=(H_d-1)/(d-1)$ [where the isotropic state becomes (un)steerable with projective measurements] shows that 
\begin{equation}
	0<\epsilon\le \kappa_d\coloneqq\frac{(d-H_d)(d+1)(H_d-1)}{(H_d+dH_d-d)(d-1)},
\end{equation}
if and only if the isotropic state with $p$ given in Eq.~\eqref{Eq:p} is unsteerable under projective measurements.
It is easy to verify that (1) the quantity $\kappa_d$ satisfies $\kappa_d<1$ for all $d\ge2$, and (2) $\kappa_d$ rapidly approaches 1 when $d\to\infty$. 
Hence, for every $0<\epsilon<1$, there exists an isotropic state $\rIso$ (with sufficiently large $d$) that is entangled but unsteerable with projective measurements, but which nevertheless attains arbitrarily large steering-inequality violation with $\rIso^{\otimes 3}$.
\end{proof}

Remarks on the implication of Theorem~\ref{Thm:Amplification} are now in order. Firstly, a direct observation shows that $\rIso$ with $p$ given in Eq~\eqref{Eq:p} is always unsteerable under projective measurements if $0\le\epsilon<\kappa_2=0.3$, where one can verify that $\kappa_d$ achieves its minimal value $0.3$ at $d=2$.
This, however, is  still not enough to guarantee that the given isotropic state $\rIso$ is unsteerable under general POVMs due to the lack of exact characterization of steerability under general POVMs.

Second, it is worth noting that the above results also hold if we replace steerability by Bell nonlocality. To see this, let us first remind that the largest Bell-inequality violation under projective measurements is upper  bounded by the upper bound given in Eq.~\eqref{Eq:UpperBound} [see Eq.~\eqref{Eq:UpperBound_LV}]. Next, note that the lower bound on steering fraction that we have presented in Eq.~\eqref{Eq:Gamma-s:Amplify0} actually inherits from a lower bound on the corresponding nonlocality fraction using the KV Bell inequality. Therefore, exactly the same calculation goes through if $\Gamma_s \smp{\bgp{\rho, \tilde{\E}^{\kv}_{\rm A}}, \tilde{\vecF}^{\kv}}$ is replaced by $\Gamma(\vecP,\vecB^\kv)$ with $\vecP$ derived from Eq.~\eqref{Eq:QCor} assuming local POVMs that lead to Eq.~\eqref{Eq:KVestimate}. In other words, for sufficiently large $d$, one can always find entangled isotropic states $\rIso$ that do not violate any Bell inequality with projective measurements, but which nevertheless attain arbitrarily large Bell-inequality violation with $\rIso^{\otimes 3}$. This improves over the result of Palazuelos~\cite{Palazuelos} which requires five copies for unbounded amplification.

\section{Discussion}
\label{Sec:Conclude}

In this paper, we have introduced the tool of steering fraction $\Gamma_s$ and used it to establish novel results spanning across various dimensions of the phenomenon of quantum steering. Below, we briefly summarize these results and comment on some possibilities for future research.

First, we have derived a general sufficient condition for {\em any} bipartite quantum state $\rho$ to be steerable (Bell nonlocal) in terms of its fully entangled fraction, a quantity closely related to the usefulness of $\rho$ for teleportation~\cite{Teleportation}. As we briefly discussed in Sec.~\ref{Sec:Characterization}, we do not expect  these sufficient conditions  to detect all steerable (Bell-nonlocal) states. Nonetheless, let us stress that to determine if a quantum state is steerable (as with determining if a quantum state can exhibit Bell nonlocality; see, e.g.,~\cite{Horodecki,Liang:PRA:2007}) is a notoriously difficult problem, which often requires the optimization over the many parameters used to describe the measurements involved in a steering experiment (and/or the consideration of potentially infinitely many different steering inequalities).

In contrast, the general criterion that we have presented in Theorem~\ref{Thm:SufficentSteerability} for steerability (and Theorem~\ref{Theorem:Suff_Condi_Nonlocality} for Bell nonlocality) only requires a relatively straightforward computation of the fully entangled fraction of the state of interest. Given that these sufficient conditions are likely to be suboptimal, an obvious question that follows is whether one can find an explicit threshold $\F_\text{thr}$ that is smaller than that given by Theorem~\ref{Thm:SufficentSteerability} (Theorem~\ref{Theorem:Suff_Condi_Nonlocality}) such that $\F>\F_\text{thr}$ still guarantees steerability (Bell nonlocality). While this may seem like a difficult problem, recent advances~\cite{algorithm} in the algorithmic construction of local hidden-variable (-state) models may shed some light on this. More generally, it will be interesting to know if our sufficient condition can be strengthened while preserving its computability. In particular, it will be very useful to find analogous steerability (Bell-nonlocality) criteria that are tight.

On the other hand, the aforementioned sufficient condition has also enabled us to derive upper bounds---as functions of $d$---on the largest steering-inequality violation $LV_s$ ($LV_s^\pi$) achievable by the maximally entangled state $\ket{\Psi^+_d}$ under general POVMs (projective measurements). In particular, using the general connection between $LV_s$ and the largest Bell-inequality violation, $LV$, established in Appendix~\ref{App:Bell}, our upper bounds on $LV_s$ and $LV_s^\pi$ imply upper bounds on $LV$ and $LV^\pi$ by $\ket{\Psi^+_d}$ (for non-negative \vecB), respectively. Notably, our upper bound on $LV^\pi$ is somewhat tighter than that due to Palazuelos~\cite{Palazuelos-funct}. If any strengthened sufficient conditions for steerability, as discussed above, are found, it would also be interesting to see if they could lead to tighter (non asymptotic) upper bound(s) on the largest steering-inequality (and/or Bell-inequality) violation attainable by $\ket{\Psi^+_d}$.

The tool of steering fraction $\Gamma_s$, in addition, can be used to quantify steerability. In particular, we showed that if $\Gamma_s$ is optimized over all (non-negative) \vecF, the resulting quantity can be cast as a {\em convex steering monotone}~\cite{Gallego}, which we referred to as the {\em optimal steering fraction} $\So$. We further demonstrated how this monotone is quantitatively related to two other convex steering monotones: steerable weight~\cite{SteerableWeight} and  steering robustness~\cite{Piani2015}. In the light of quantum information, it would be desirable to determine an operational meaning of $\So$, e.g., in the context of some quantum information tasks (cf. steering robustness~\cite{Piani2015}). Establishment of quantitative relations between $\So$ and other convex steering monotones, such as the relative entropy of steering~\cite{Gallego}, would certainly be very welcome. In particular, it would be highly desirable to establish quantitative relations that allow one to estimate $\So$ from other easily computable steering monotones, such as the steerable weight or the steering robustness.

Using the established sufficient condition for steerability, we have also demonstrated the superactivation of steerability, i.e., the phenomenon that certain unsteerable quantum state $\rho$ becomes, for sufficiently large $k$, steerable when joint {\em local} measurements on $\rho^{\otimes k}$ are allowed. 
A drawback of the examples that we have presented here is that they inherit directly from the superactivation of Bell nonlocality due to Palazuelos~\cite{Palazuelos} and Cavalcanti {\em et al.}~\cite{Cavalcanti-PRA}. An obvious question that follows is whether one can construct explicit examples for the superactivation of steerability using quantum states whose Bell nonlocality {\em cannot} be superactivated via joint measurements. 

One the other hand, with joint local measurements,  we showed that the steering-inequality (Bell-inequality) violation of certain barely steerable (Bell-nonlocal) $\rho$ [or even unsteerable (Bell-local) $\rho$ with projective measurements] can be arbitrarily amplified, in particular, giving an arbitrarily large steering-inequality (Bell-inequality) violation with $\rho^{\otimes 3}$. Could such unbounded amplification be achieved using joint measurements on two copies of the same state? Our proof technique, see Eq.~\eqref{Eq:Gamma-s:Amplify}, clearly requires a minimum of three copies for unbounded amplification to take place but it is conceivable that a smaller number of copies suffices if some other steering (Bell) inequality is invoked, a problem that we shall leave for future research.

{\em Note added.} Recently, we became aware of the work of~\cite{Quintino:unpublished} who independently (1) derived a sufficient condition of steerability in terms of the fully entangled fraction and (2) demonstrated the superactivation of steering of the isotropic states.
Moreover, after submission of this work, an anonymous referee of QIP2017 brought to our attention that for any given assemblage, its optimal steering fraction is actually identical to its steering robustness $\SR$, as can be seen from the dual semidefinite programming formulation of steering robustness given in Eq. (41) in~\cite{SDP} (see also~\cite{Piani2015}).

\begin{acknowledgments}
The authors acknowledge useful discussions with Nicolas Brunner, Daniel Cavalcanti, Flavien Hirsch, and Marco T\'ulio Quintino and helpful suggestions from an anonymous referee of AQIS2016. This work is supported by the Ministry of Education, Taiwan, R.O.C., through ``Aiming for the Top University Project" granted to the National Cheng Kung University (NCKU), and by the Ministry of Science and Technology, Taiwan (Grants No. 104-2112-M-006-021-MY3 and No. 105-2628-M-007-003-MY4).
\end{acknowledgments}

\appendix

\section{Proof of Lemma~\ref{Lemma:Twirling-SF}}
\label{App:Proof:Lemma}

Here, we provide a proof of Lemma~\ref{Lemma:Twirling-SF}. 
\begin{proof}

For any given state $\rho'$, local POVMs $\Ea$ and non-negative $\vecF$, let us note that
\begin{align}
&\Gamma_s \left(\left\{T(\rho'),\Ea\right\},\vecF\right)\coloneqq \frac{\sum_{a,x} \text{tr}\left[({E}_{a|x} \otimes {F}_{a|x})T(\rho')\right]}{\bc({\vecF})}\nonumber\\
&=\int_{U(d)} \frac{\sum_{a,x} \text{tr}\left[(U^\dag\,{E}_{a|x}\,U \otimes U^{*\dag}\,{F}_{a|x}\,U^{*}) \rho'\right]}{\bc({\vecF})}\,dU\nonumber\\
&= \int_{U(d)}\Gamma_s\left(\left\{\rho', {\E}_U\right\}, {\vecF}_U\right)dU\nonumber\\
&\le \max_{U\in U(d)}\Gamma_s\left(\left\{\rho', {\E}_U\right\},{\vecF}_U\right),
\label{Eq:MaxGammas:U:Twirl}
\end{align}
where ${\E}_U\coloneqq {\{ U^{\dag}{E}_{a|x}U \}}$ and ${\vecF}_U\coloneqq {\{ U^{*\dag}F_{a|x}U^* \}}$.

Denoting by $U_\Gamma$ the unitary operator achieving the maximum in Eq.~\eqref{Eq:MaxGammas:U:Twirl}, the above inequality implies that
\begin{equation}\label{GammaS:1}
	\Gamma_s\left(\left\{\rho', {\E}_{U_\Gamma}\right\},{\vecF}_{U_\Gamma}\right) \ge \Gamma_s \left(\left\{T(\rho'),{\Ea}\right\},{\vecF}\right).
\end{equation}

For any given state $\rho$, let us further denote by $U_\F$ the unitary operator that maximizes the FEF of $\rho$ in Eq.~\eqref{Eq:FEF}, i.e., 
\begin{equation}
	\F(\rho)=\bra{\Psi^+_d}(U_\F\otimes \Id_B)\,\rho\, (U_\F\otimes \Id_B)^\dag\ket{\Psi^+_d}.
\end{equation}
Defining $\rho':=(U_\F\otimes \Id_B)\,\rho\, (U_\F\otimes \Id_B)^\dag$ and $\tilde{\E}'_{\rm A} = U_\F\, \tilde{\E}_{\rm A}\, U_\F^\dag$, we then have
\begin{subequations}\label{GammaS:2}
\begin{equation}
	\Gamma_s\left(\left\{\rho,\tilde{\E}_{\rm A}\right\},\tilde{\vecF}\right) = \Gamma_s\left(\left\{\rho',\tilde{\E}'_{\rm A}\right\},\tilde{\vecF}\right)
\end{equation}
with 
\begin{equation}
	\F(\rho)=\F(\rho')=\bra{\Psi^+_d}\rho'\ket{\Psi^+_d}=\F\left[T(\rho')\right],
\end{equation}
\end{subequations}
where the last equality follows from the fact~\cite{Horodecki-1,Horodecki-2} that if $\F(\rho')$ is attained with $\ket{\Psi^+_d}$, then $\F\left[T(\rho')\right]$ is attained with $\ket{\Psi^+_d}$.

Combining Eqs.~\eqref{GammaS:1} and~\eqref{GammaS:2} by setting $\E_{U_\Gamma}=\tilde{\E}'_{\rm A}$ and $\vecF_{U_\Gamma}=\tilde{\vecF}$ then gives the desired inequality:
\begin{eqnarray}
	\Gamma_s\left(\left\{\rho,\tilde{\E}_{\rm A}\right\},\tilde{\vecF}\right)&&=\Gamma_s\left(\left\{\rho', {\E}_{U_\Gamma}\right\},{\vecF}_{U_\Gamma}\right) \nonumber\\
	&&\ge \Gamma_s \left(\left\{T(\rho'),{\Ea}\right\},{\vecF}\right),
\end{eqnarray}
with
\begin{equation}
\begin{split}
	\tilde{\E}_{\rm A}=U_\F^\dag\,\tilde{\E}'_{\rm A}\, &U_\F=U_\F^\dag\,U_\Gamma^{\dag}{\Ea}\,U_\Gamma\, U_\F,\\
	\tilde{\vecF}&=U_\F^{*\dag}\, \vecF\, U^*_\F,	
\end{split}
\end{equation}
which completes the proof.
\end{proof}

As a remark, analogous steps but with $\max$ in Eq.~\eqref{Eq:MaxGammas:U:Twirl} replaced by $\min$ leads to the fact that for any given $\rho$, $\Ea$, and non-negative $\vecF$, there exist $\tilde{\E}_{\rm A}$ and $\tilde{\vecF}$ such that $\Gamma_s (\{\rho, \tilde{\E}_{\rm A}\}, \tilde{\vecF}) \le \Gamma_s \left(\left\{T(\rho),\Ea\right\},\vecF\right)$.

\section{Sufficient Condition of Bell nonlocality and Upper Bounds on the Largest Bell-inequality Violation of $\ket{\Psi^+_d}$}
\label{App:Bell}

\subsection{Sufficient condition of Bell nonlocality}

For any given quantum state $\rho$, local POVMs $\E$, and (linear) Bell inequality Eq.~\eqref{Eq:BI}, the largest Bell-inequality violation is defined as~\cite{Palazuelos}: 
\begin{eqnarray}\label{Eq:LV-Bell}
	LV(\rho,\vecB)\coloneqq\sup_{\mathbb{E}}\Gamma\left(\left\{\rho,\mathbb{E}\right\},\vecB\right).
\end{eqnarray}
Using arguments exactly analogous to the proof of Theorem~\ref{Thm:SufficentSteerability}, one can establish the following sufficient condition for a bipartite quantum state $\rho$ to be Bell nonlocal.
\begin{theorem}\label{Theorem:Suff_Condi_Nonlocality}
Given a state $\rho$ acting on $\mathbb{C}^d\otimes\mathbb{C}^d$ and  $\vecB\coloneqq\{B_{ab|xy}\ge 0\}_{a,b,x,y}$, a sufficient condition for $\rho$ to violate the Bell inequality specified by $\vecB$ and hence be Bell nonlocal is
\begin{eqnarray}\label{Eq:Bell:SufficientCondition}
	\F(\rho)>\frac{1}{LV(\ket{\Psi_d^+},B)}.
\end{eqnarray}
\end{theorem}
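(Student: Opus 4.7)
The plan is to mirror the proof of Theorem~\ref{Thm:SufficentSteerability} almost verbatim, replacing the steering fraction $\Gamma_s$, steering functional, and $LV_s$ by their Bell counterparts $\Gamma$, Bell functional, and $LV$ from Eq.~\eqref{Eq:LV-Bell}.

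First I would establish a Bell analog of Lemma~\ref{Lemma:Twirling-SF}: for any state $\rho$ on $\mathbb{C}^d\otimes\mathbb{C}^d$, local POVMs $\E=\Ea\cup\Eb$, and non-negative Bell coefficients $\vecB$, there exist local POVMs $\tilde{\E}$ and a state $\rho'$ locally unitarily equivalent to $\rho$ satisfying
\begin{equation}
\F[T(\rho')]=\langle\Psi^+_d|\rho'|\Psi^+_d\rangle=\F(\rho')=\F(\rho)
\end{equation}
together with $\Gamma\smp{\bgp{\rho,\tilde{\E}},\vecB}\ge\Gamma\smp{\bgp{T(\rho'),\E},\vecB}$. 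The argument is the same Haar-averaging trick used in Lemma~\ref{Lemma:Twirling-SF}: averaging $\rho'$ against $U\otimes U^*$ is equivalent, by cyclicity of the trace, to conjugating the joint POVM elements $\{E_{a|x}\otimes E_{b|y}\}$ by $U\otimes U^*$, and the supremum over $U$ dominates the Haar average; the rotation $U_\F$ that puts the FEF of $\rho$ on $\ket{\Psi^+_d}$ is then absorbed into $\tilde{\E}_{\rm A}$, exactly as in the steering case. Unlike in Lemma~\ref{Lemma:Twirling-SF}, the Bell coefficients $\vecB$ are just real numbers rather than operators, so there is no ``$\tilde{\vecB}$'' to track---all of the unitary rotations are absorbed into the POVMs on both sides.

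Next I would decompose the isotropic state $T(\rho')$ as $T(\rho')=\F(\rho)\proj{\Psi^+_d}+\lambda\smp{\Id-\proj{\Psi^+_d}}$ with $\lambda\ge0$. Because every $E_{a|x}\otimes E_{b|y}$ is positive semidefinite and every $B_{ab|xy}\ge0$, the contribution of $\Id-\proj{\Psi^+_d}$ to the Bell functional is non-negative, so linearity of $\Gamma$ in the state yields
\begin{equation}
\Gamma\smp{\bgp{T(\rho'),\E},\vecB}\ge\F(\rho)\,\Gamma\smp{\bgp{\ket{\Psi^+_d},\E},\vecB},
\end{equation}
in exact parallel with Eq.~\eqref{Eq:Estimate1}. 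Combining this with the previous step and taking the supremum over POVMs on both sides (using that $\E\mapsto\tilde{\E}$ is a bijection on the set of local POVMs) gives $LV(\rho,\vecB)\ge\F(\rho)\,LV(\ket{\Psi^+_d},\vecB)$. The hypothesis $\F(\rho)>1/LV(\ket{\Psi^+_d},\vecB)$ then forces $LV(\rho,\vecB)>1$, which by Eq.~\eqref{Eq:LV-Bell} certifies that $\rho$ violates the Bell inequality specified by $\vecB$ for some choice of local POVMs, and is therefore Bell nonlocal.

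The only step that requires genuine work is the Bell version of Lemma~\ref{Lemma:Twirling-SF}, and even that is essentially a verbatim rewrite of the existing proof. I therefore do not anticipate a real obstacle; the main sanity checks are that the non-negativity of $\vecB$ (which plays the same role as the non-negativity of $\vecF$ in the steering case) is correctly used when the $\Id-\proj{\Psi^+_d}$ contribution is discarded, and that the twirling unitaries are consistently absorbed into Bob's POVMs as well as Alice's.
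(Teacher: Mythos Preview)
Your proposal is correct and follows exactly the approach the paper indicates: the paper merely states that Theorem~\ref{Theorem:Suff_Condi_Nonlocality} follows by ``arguments exactly analogous to the proof of Theorem~\ref{Thm:SufficentSteerability},'' and you have spelled out precisely those analogous steps, including the correct observation that the twirling unitaries are absorbed into both Alice's and Bob's POVMs so that $\vecB$ itself is unchanged (which is why, as the paper notes after Eq.~\eqref{Eq:Bell:SufficientCondition}, no local-unitary freedom appears on the right-hand side).
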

Note that unlike Theorem~\ref{Thm:SufficentSteerability}, there is no local unitary degree of freedom in the right-hand side of Eq.~\eqref{Eq:Bell:SufficientCondition}. Using again the fact that $\F\left(\rho^{\otimes k}\right)\ge \left[\F(\rho)\right]^k$, a direct corollary of Theorem~\ref{Thm:SufficentSteerability} is the following sufficient condition for  $\rho^{\otimes k}$ to be Bell nonlocal:
\begin{eqnarray}
	\F(\rho)>\left[\frac{1}{LV(\ket{\Psi_d^+},\vecB)}\right]^{\frac{1}{k}}.
\end{eqnarray}

\subsubsection{A sufficient condition based on the Collins-Gisin-Linden-Massar-Popescu-Bell inequality}

As an explicit example, let us consider the family of a two-setting, $d$-outcome inequality due to Collins-Gisin-Linden-Massar-Popescu~\cite{CGLMP}. This inequality can be re-written in a form that involves only the following non-negative coefficients:
\begin{equation}
	B^\text{\tiny CGLMP}_{ab|xy}=    
	\left\{ \begin{array}{c@{\quad\quad}l}
         2+\frac{2(a-b)}{d-1}, & b\ge a\,\,\, \text{and}\,\,\, x+y=0,\\
        \frac{2(a-b-1)}{d-1}, & b<a\,\,\, \text{and}\,\,\, x+y=0,\\
         \frac{2(b-a-1)}{d-1}, & b>a\,\,\, \text{and}\,\,\, x+y=1,\\
        2+\frac{2(b-a)}{d-1}, & b\le a\,\,\, \text{and}\,\,\, x+y=1,\\
         2-\frac{2(b-a-1)}{d-1}, & b>a\,\,\, \text{and}\,\,\, x+y=2,\\
        \frac{2(a-b)}{d-1}, & b\le a\,\,\, \text{and}\,\,\, x+y=2,
        \end{array} \right.
\end{equation}
such that

\begin{equation}\label{Eq:CGLMP}
	\sum_{a,b=0}^{d-1}\sum_{x,y=0,1} B^\text{\tiny CGLMP}_{ab|xy}P(a,b|x,y) \stackrel{\mbox{\tiny LHV}}{\le} 6.
\end{equation}
A (tight) lower bound on the largest Bell-inequality violation of this inequality can be inferred from the result presented in~\cite{CGLMP} as
\begin{eqnarray}\label{Eq:CHSH:LV}
&&LV(\ket{\Psi_d^+},\vecB^\text{\tiny CGLMP}) \nonumber\\
&&=\frac{2}{3}\left\{ 1+d\sum_{k=0}^{\lfloor \frac{d}{2}\rfloor-1}\left(1-\frac{2k}{d-1}\right)\left[q_k-q_{-(k+1)}\right]\right\},\quad
\end{eqnarray}
where $q_k=\tfrac{1}{2d^3\sin^3\left[\pi(k+\frac{1}{4})/d\right]}$.
Putting Eq.~\eqref{Eq:CHSH:LV} into Eq.~\eqref{Eq:Bell:SufficientCondition}, we thus see that 
\begin{equation}\label{Eq:SufficientCGLMP}
		\F(\rho)>\frac{3}{2+2d\sum_{k=0}^{\lfloor \frac{d}{2}\rfloor-1}\left(1-\frac{2k}{d-1}\right)\left[q_k-q_{-(k+1)}\right]}
\end{equation}
is a sufficient condition for Bell nonlocality. For $d=2$, this can be evaluated explicitly to give
\begin{equation}
		\F(\rho)>\frac{3}{2+\sqrt{2}}\approx0.8787,
\end{equation}
whereas, in the asymptotic limit of $d\to\infty$, the sufficient condition becomes $\F(\rho)>0.8611$.

\subsubsection{A sufficient condition based on the Khot-Vishnoi nonlocal game}

Although the sufficient condition of Eq.~\eqref{Eq:SufficientCGLMP} can be applied to an arbitrary Hilbert space dimension $d$, there is no reason to expect that it is optimal for all $d\ge 2$. Indeed, when $d$ is a power of 2 and when $d\ge 2^{10}$, a considerably stronger sufficient condition for Bell nonlocality can be established based on the known lower bound of $LV(\ket{\Psi_d^+},\vecB^\kv)$ given in Eq.~\eqref{Eq:KVestimate}. Explicitly, applying Eq.~\eqref{Eq:KVestimate} to Eq.~\eqref{Eq:Bell:SufficientCondition} gives the following sufficient condition of Bell nonlocality:
\begin{eqnarray}\label{Eq:SufficientKV}
	\F (\rho) > \frac{e^4}{4}\frac{(\ln{d})^2}{d},\quad d=2^m,\quad m\in\mathbb{N},
\end{eqnarray}
where $\mathbb{N}$ is the set of positive integers. This sufficient condition is non trivial (i.e., the lower bound given above is {\em less} than 1) only when $d \ge 2^{10}$. At this critical value of $d$, the sufficient condition of Eq.~\eqref{Eq:SufficientKV}  becomes $\F(\rho)\gtrsim 0.6404$, which is considerably better than that given by  Eq.~\eqref{Eq:SufficientCGLMP}. Notice also that, for $m\ge 3$, the right-hand side of Eq.~\eqref{Eq:SufficientKV} decreases monotonically with increasing $m$. Thus, when measured in terms of the fully entangled fraction, Eq.~\eqref{Eq:FEF}, we see that the fraction of the set of Bell-nonlocal states that can be detected by Eq.~\eqref{Eq:SufficientKV} whenever $d$ is a power of 2  increases monotonically with $d$.

\subsection{Upper bounds on the largest Bell-inequality violation}

In this section, we demonstrate how the largest steering-inequality violation $LV_s$ is related to the largest Bell-inequality violation $LV$ for the case when  $\vecF$ is induced from the given $\vecB$.
\begin{theorem}\label{theorem:LV_Upper_bounded_by_LVs}
Given $\vecB\coloneqq\{B_{ab|xy}\ge 0\}_{a,b,x,y}$ and a state $\rho$ acting on $\mathbb{C}^d\otimes\mathbb{C}^d$, we have
\begin{eqnarray}
	LV(\rho,\vecB)\le \sup_{\Eb}LV_s\left[\rho,\vecF_{(\vecB; \Eb)}\right],
\end{eqnarray}
where the supremum is taken over all possible sets of Bob's POVMs $\Eb:=\{ E_{b|y}\}_{b,y}$, and
$\vecF_{(\vecB;\Eb)}$ is given by Eq.~\eqref{Eq:InducedF}. 
\end{theorem}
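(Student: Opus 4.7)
The plan is to combine the definition of the largest Bell-inequality violation in Eq.~\eqref{Eq:LV-Bell} with the pointwise inequality between the nonlocality fraction and the steering fraction already established in Eq.~\eqref{Eq:BellvsSteering}. Unpacking the definitions, we have $LV(\rho,\vecB)=\sup_{\Ea,\Eb}\Gamma(\vecP,\vecB)$, where $\vecP$ is the quantum correlation obtained from $\rho$, $\Ea$, and $\Eb$ via Eq.~\eqref{Eq:QCor}, while $LV_s[\rho,\vecF_{(\vecB;\Eb)}]=\sup_{\Ea}\Gamma_s(\{\rho,\Ea\},\vecF_{(\vecB;\Eb)})$.

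First I would fix an arbitrary pair $(\Ea,\Eb)$ of local POVMs and invoke Eq.~\eqref{Eq:BellvsSteering}, which gives
\begin{equation}
\Gamma(\vecP,\vecB)\le \Gamma_s\sqp{\{\rho,\Ea\},\vecF_{(\vecB;\Eb)}},
\end{equation}
with $\vecF_{(\vecB;\Eb)}$ the induced steering functional defined in Eq.~\eqref{Eq:InducedF}. Taking the supremum over Alice's measurements $\Ea$ on both sides while keeping $\Eb$ fixed, and using the definition of $LV_s$ in Eq.~\eqref{Eq:LV}, yields
\begin{equation}
\sup_{\Ea}\Gamma(\vecP,\vecB)\le LV_s\sqp{\rho,\vecF_{(\vecB;\Eb)}}.
\end{equation}
Finally, I would take the supremum over Bob's measurements $\Eb$ on both sides. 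On the left, this produces $\sup_{\Ea,\Eb}\Gamma(\vecP,\vecB)=LV(\rho,\vecB)$, while on the right it produces $\sup_{\Eb}LV_s[\rho,\vecF_{(\vecB;\Eb)}]$, giving precisely the claimed inequality.

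This argument is essentially a routine manipulation of nested suprema, so I do not anticipate any real obstacle. The only subtlety worth highlighting is that the decoupling of the supremum over $\E=\Ea\cup\Eb$ into an inner supremum over $\Ea$ and an outer supremum over $\Eb$ is legitimate because Eq.~\eqref{Eq:BellvsSteering} holds for every fixed choice of $\Eb$, and the induced functional $\vecF_{(\vecB;\Eb)}$ on the right-hand side depends only on $\Eb$, not on $\Ea$. This ensures the bound carries through to the full supremum without requiring any extra assumption on $\rho$, $\vecB$, or the dimension $d$, and in particular relies only on the non-negativity $B_{ab|xy}\ge 0$ already used to derive Eq.~\eqref{Eq:Bound:LvsS}.
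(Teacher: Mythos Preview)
Your proposal is correct and follows essentially the same approach as the paper: both arguments start from the pointwise inequality Eq.~\eqref{Eq:BellvsSteering} and then take the supremum over $\Ea$ and $\Eb$, identifying the inner supremum over $\Ea$ with $LV_s[\rho,\vecF_{(\vecB;\Eb)}]$ via Eq.~\eqref{Eq:LV}. The only cosmetic difference is that the paper takes $\sup_{\Ea,\Eb}$ in one step and then bounds it by $\sup_{\Eb}LV_s$, whereas you take the two suprema sequentially; the content is identical.
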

\begin{proof}
Let $\Ea:=\{ E_{a|x}\}_{a,x}$ be a generic set of Alice's POVMs and $\E$ be the union of $\Ea$ and $\Eb$.  Suppose that the local POVMs $\Ea$ and $\Eb$ acting on the joint state $\rho$ of Alice and Bob gives rise to the correlation $\vecP$; then Eq.~\eqref{Eq:BellvsSteering} implies that
\begin{equation}
	\Gamma \left(\left\{\rho,\E\right\},\vecB\right)\le \Gamma_s \sqp{\left\{\rho,\Ea\right\},\vecF_{(\vecB;\Eb)}},
\end{equation}
where we write $\Gamma \left(\vecP,\vecB\right):=\Gamma \left(\left\{\rho,\E\right\},\vecB\right)$ to make the dependence of $\vecP$ on $\rho$ and $\E$ explicit.
Thus,
\begin{eqnarray}
&&LV(\rho,B)=\sup_{\E}\Gamma \left(\left\{\rho,\E\right\},\vecB\right)\nonumber\\
&&\le \sup_{\Ea, \Eb} \Gamma_s \sqp{\left\{\rho,\Ea\right\},\vecF_{(\vecB;\Eb)}}\nonumber\\
&&\le \sup_{\Eb} LV_s\left[\rho,\vecF_{(\vecB;\Eb)}\right],
\end{eqnarray}
which completes the proof.
\end{proof}
Theorem \ref{theorem:LV_Upper_bounded_by_LVs} implies that $LV(\rho,\vecB)$ is upper bounded by the highest value of the largest steering-inequality violation of steering inequalities that can be induced by $\vecB$. 
In this sense, we can interpret
$\sup_{\Eb} LV_s[\rho,\vecF_{(\vecB;\Eb)}]$ as the largest steering-inequality violation arising from a given $\vecB$.
In particular, the largest Bell-inequality violation achievable by a maximally entangled state for {\em any} non-negative $\vecB$  under projective measurements must also be {\em upper bounded} by Eq.~\eqref{Eq:UpperBound}:
\begin{eqnarray}\label{Eq:UpperBound_LV}
	LV^\pi(\ket{\Psi_d^+},\vecB)\le \frac{d^2}{H_d+H_d d-d}.
\end{eqnarray}
Note that Eq.~\eqref{Eq:UpperBound_LV} {\em implies}---with the fact given in footnote \ref{fn:Upperbound}---Palazuelos' upper bound~\cite{Palazuelos-funct} of the largest Bell-inequality violation of maximally entangled states under projective measurements (for non-negative $\vecB$).
Also, Eq.~\eqref{Eq:UpperBound_generalPOVM} implies the following upper bound:
\begin{eqnarray}
LV(\ket{\Psi_d^+},F)\le\frac{d^2}{(d^2-1)\tilde{p}^\phi+1},
\end{eqnarray}
which scales as $d$ when $d\gg1$.
It is worth noting that, in the case of general POVMs, Palazuelos' upper bound (Theorem 0.3 in \cite{Palazuelos-funct}) is better than ours by a scaling factor $\frac{1}{\sqrt{d}}$, but we have used a much simpler approach in our derivation (than the operator space theory approach of~\cite{Palazuelos-funct}).

A nice feature of the upper bounds on $LV^{\pi}(\ket{\Psi_d^+},\vecB)$ and $LV(\ket{\Psi_d^+},\vecB)$ presented above is that they apply to all dimensions $d$ and all non-negative $\vecB$. 
The drawback, however, is that they are generally not tight. 
For instance, for the two-qubit maximally entangled state, the inequality above gives $LV^{\pi}(\ket{\Psi_{d=2}^+},\vecB)\le 1.6$, but if we make explicit use of the nonlocal properties of $\rIso(p)$, then this bound can be tightened. 
Firstly, let us recall from~\cite{Acin} that the threshold value of $p$ above which $\rIso(p)$ violates some Bell inequality by projective measurements is given by $p_c=\tfrac{1}{K_G(3)}$, where $K_G(3)$ is Grothendieck's constant of order 3~\cite{Finch}. 
Although the exact value of the constant is not known, it is known to satisfy the following bounds:
\begin{equation}\label{Eq:KG}
	1.4172 \le K_G(3)\le 1.5163,
\end{equation}
where the lower bound is due to V\'ertesi~\cite{Vertesi} and the upper bound is due to Krivine~\cite{Krivine}.

Note that for $d=2$, $p=p_c$ corresponds to a FEF of $\F_c=\tfrac{1}{4}(3p_c+1)$. Then, in order for Eq.~\eqref{Eq:Bell:SufficientCondition} to be consistent with this observation, we must have
\begin{eqnarray}\label{Eq:GrothendickBound}
	LV^{\pi}(\ket{\Psi_2^+},\vecB)\le\frac{1}{\F_c}=\frac{4K_G(3)}{3+K_G(3)}\le 1.2552,
\end{eqnarray}
where both bounds in Eq.~\eqref{Eq:KG} have been used to arrive at the last inequality.

\section{Proofs related to the properties of the optimal steering fraction}
\label{App:So}

\subsection{Proof that $\So$ is a convex steering monotone}\label{App:SO_Proof}

The proof proceeds in two parts. We first show that $\So$ is a convex function that vanishes for unsteerable assemblages. We then show that it is a steering monotone~\cite{Gallego}, that is, non increasing, on average, under one-way local operations and classical communications (1W-LOCCs). The first part of the proof follows from the following lemma.
\begin{lemma}\label{Lemma:So_Convex}
  $\So(\vecA)$ is convex in $\vecA$ and satisfies $\So(\vecA)=0$ $\forall\,\,\vecA\in\text{LHS}$.
\end{lemma}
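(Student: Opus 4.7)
The plan is to reduce the lemma to two very standard facts about pointwise suprema of linear functionals: that they are convex, and that they respect the defining inequality of $\omega_s(\vecF)$.

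First I would handle the vanishing on LHS assemblages. If $\vecA=\asb$ admits an LHS decomposition, then by the very definition of the steering bound $\omega_s(\vecF)$ in Eq.~\eqref{Eq:SteeringBound} we have $\sum_{a,x}\tr(F_{a|x}\sigma_{a|x})\le \omega_s(\vecF)$ for every non-negative $\vecF$, so $\Gamma_s(\vecA,\vecF)\le 1$ and $\sup_{\vecF\succeq 0}\Gamma_s(\vecA,\vecF)-1\le 0$. The outer $\max\{0,\cdot\}$ then forces $\So(\vecA)=0$.

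Next I would establish convexity. The key observation is that for any fixed non-negative $\vecF$, the map $\vecA\mapsto \Gamma_s(\vecA,\vecF)$ is linear in the operators $\sigma_{a|x}$ (it is a normalized trace inner product, with $\omega_s(\vecF)$ independent of $\vecA$). So given a convex decomposition $\vecA=\mu\vecA'+(1-\mu)\vecA''$,
\begin{equation}
\Gamma_s(\vecA,\vecF)=\mu\,\Gamma_s(\vecA',\vecF)+(1-\mu)\,\Gamma_s(\vecA'',\vecF).
\end{equation}
Bounding each term on the right by its own supremum over $\vecF\succeq 0$ and then taking the supremum over $\vecF$ on the left yields
\begin{equation}
\sup_{\vecF\succeq 0}\Gamma_s(\vecA,\vecF)\le \mu\sup_{\vecF\succeq 0}\Gamma_s(\vecA',\vecF)+(1-\mu)\sup_{\vecF\succeq 0}\Gamma_s(\vecA'',\vecF).
\end{equation}
Subtracting $1=\mu+(1-\mu)$ from both sides preserves this inequality.

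Finally I would pass through the outer $\max\{0,\cdot\}$. Since $t\mapsto\max\{0,t\}$ is convex and non-decreasing, applying it to both sides gives
\begin{equation}
\So(\vecA)\le \max\{0,\mu\alpha'+(1-\mu)\alpha''\}\le \mu\So(\vecA')+(1-\mu)\So(\vecA''),
\end{equation}
where $\alpha'\coloneqq\sup_{\vecF\succeq 0}\Gamma_s(\vecA',\vecF)-1$ and similarly for $\alpha''$, and the second inequality uses convexity of $\max\{0,\cdot\}$ together with $\mu,1-\mu\ge 0$. This is precisely the convexity statement for $\So$.

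There is really no serious obstacle here — the argument is just ``supremum of linear functionals is convex, and $\max\{0,\cdot\}$ is convex and monotone.'' The only thing to be mildly careful about is to keep the two suprema on the right-hand side separate before combining them with $\mu$ and $1-\mu$, so that the direction of the inequality is correct after subtracting $1$ and after applying $\max\{0,\cdot\}$.
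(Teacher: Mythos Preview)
Your proof is correct and follows essentially the same approach as the paper: both use the linearity of $\Gamma_s(\cdot,\vecF)$ to get $\sup_{\vecF\succeq 0}\Gamma_s(\vecA,\vecF)\le \mu\sup_{\vecF\succeq 0}\Gamma_s(\vecA',\vecF)+(1-\mu)\sup_{\vecF\succeq 0}\Gamma_s(\vecA'',\vecF)$, and then pass through the outer truncation. The only cosmetic difference is that the paper handles the truncation by a case split on whether $\So(\vecA)=0$, whereas you invoke the convexity and monotonicity of $t\mapsto\max\{0,t\}$ directly; your phrasing is arguably cleaner but the content is identical.
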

\begin{proof}
From the definition of $\Gamma_s(\vecA,\vecF)$ and $\So(\vecA)$, it follows immediately that $\So(\vecA)=0$ if $\vecA \in \text{LHS}$. Note that for any {\em convex} decomposition of the assemblage $\sigma_{a|x}=\mu\sigma_{a|x}'+(1-\mu)\sigma_{a|x}''$ with $\mu\in[0,1]$ and $\vecA'$, $\vecA''$ being two legitimate assemblages, we have 
\begin{equation*}
	\sup_{\vecF\succeq 0}\Gamma_s(\vecA,\vecF)\le\mu\sup_{\vecF\succeq 0}\Gamma_s(\vecA',\vecF)+(1-\mu)\sup_{\vecF\succeq 0}\Gamma_s(\vecA'',\vecF).
\end{equation*} 
When $\So(\vecA)=0$, the convexity of $\So(\vecA)$ holds trivially. In the non trivial case when $\So(\vecA)=\sup_{\vecF\succeq 0}\Gamma_s(\vecA,\vecF)-1 > 0$, it follows from the above inequality for the steering fractions that $\So(\vecA)\le\mu \So(\vecA')+(1-\mu)\So(\vecA'')$, which completes the proof of the convexity of $\So$.
\end{proof}

Next, we shall demonstrate the monotonicity of $\So$ by showing the following theorem.
\begin{theorem}\label{Thm:Monotonic}
$\So$ does not increase, on average, under deterministic 1W-LOCCs, i.e.~\cite{Gallego}, 
\begin{equation}
	\sum_\omega P(\omega)\, \So\left(\frac{\M_\omega\left(\vecA\right)}{P(\omega)}\right) \le \So\left(\vecA\right)
\end{equation}
for all assemblages $\vecA=\asb$, where $P(\omega)=\tr\sqp{\M_\omega(\vecA)}$ and $\M_\omega$ is the subchannel of the completely positive map $\M$ labeled by $\omega$, i.e., 
\begin{equation}
	\M_\omega(\cdot)\coloneqq  K_\omega\,\widetilde{\W}_\omega(\cdot)\, K_\omega^\dagger,
\end{equation}
and $\widetilde{\W}_\omega$ is a {\em deterministic wiring map} that transforms a given assemblage $\vecA=\asb$ to another assemblage $\left\{\tilde{\sigma}_{a'|x'}\right\}_{a',x'}$ with setting $x'$ and outcome $a'$:
\begin{equation*}
	[ \widetilde{\W}_\omega(\vecA)]_{x'}\coloneqq \tilde{\sigma}_{a'|x'}=\sum_{a,x}P(x|x',\omega)P(a'|x',a,x,\omega)\sigma_{a|x}.
\end{equation*}
\end{theorem}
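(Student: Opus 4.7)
The plan is to exploit the canonical decomposition $\M_\omega = [K_\omega(\cdot)K_\omega^\dagger]\circ\widetilde{\W}_\omega$ of a 1W-LOCC element into a deterministic wiring on Alice's side and a Kraus component on Bob's side, and to establish the inequality in a single shot. Concretely, for any collection of non-negative functionals $\{\vecF^\omega\}_\omega$ on the branches $\M_\omega(\vecA)/P(\omega)$, I will construct a single non-negative functional $\vecF$ on $\vecA$ such that
\begin{equation*}
\sum_\omega P(\omega)\,\Gamma_s\!\left(\tfrac{\M_\omega(\vecA)}{P(\omega)},\vecF^\omega\right)\le \sum_{a,x}\tr(F_{a|x}\sigma_{a|x})\ \text{and}\ \omega_s(\vecF)\le 1.
\end{equation*}
Combined with $\sum_\omega P(\omega) = \tr[(\sum_\omega K_\omega^\dagger K_\omega)\rho_\text{B}] = 1$, which follows from the POVM property of $\{K_\omega^\dagger K_\omega\}_\omega$ and no-signaling for the marginal $\rho_\text{B}=\sum_a\sigma_{a|x}$, taking the supremum over each $\vecF^\omega$ will yield $\sum_\omega P(\omega)[\So(\M_\omega(\vecA)/P(\omega))+1]\le \So(\vecA)+1$, which is the stated monotonicity.

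The wiring step is the easy half. Substituting the definition of $\widetilde{\W}_\omega$ into $\sum_{a',x'}\tr(F^\omega_{a'|x'}\tilde\sigma^\omega_{a'|x'})$ exhibits it as $\sum_{a,x}\tr(\bar F^\omega_{a|x}\sigma_{a|x})$ with the non-negative pullback $\bar F^\omega_{a|x}\coloneqq \sum_{a',x'}P(x|x',\omega)P(a'|x',a,x,\omega)F^\omega_{a'|x'}$. Since $\widetilde{\W}_\omega$ sends LHS assemblages to LHS assemblages (convex mixtures of the same $\sigma_\lambda$'s against an updated response function), the steering bound cannot decrease under the pullback, and hence $\Gamma_s$ does not increase. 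Thus wirings can be absorbed into the functionals, and I may assume from now on that $\widetilde{\W}_\omega$ is the identity and the functionals act already on the original labels $(a,x)$.

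For the Kraus step, the pairing identity $\tr(F^\omega_{a|x}K_\omega\sigma_{a|x}K_\omega^\dagger)=\tr(K_\omega^\dagger F^\omega_{a|x}K_\omega\,\sigma_{a|x})$ suggests defining
\begin{equation*}
F_{a|x}\coloneqq \sum_\omega \frac{K_\omega^\dagger F^\omega_{a|x} K_\omega}{\omega_s(\vecF^\omega)}\succeq 0,
\end{equation*}
and a direct calculation (the $P(\omega)$ weight cancels the $1/P(\omega)$ normalization on each branch) gives the required equality of numerators. The main obstacle is then the steering-bound estimate $\omega_s(\vecF)\le 1$. Evaluating it on an arbitrary LHS decomposition $\sigma^{\rm LHS}_{a|x}=\int P_\lambda P(a|x,\lambda)\sigma_\lambda\,d\lambda$ and pushing $K_\omega$ through onto $\sigma_\lambda$, each $K_\omega\sigma^{\rm LHS}_{a|x}K_\omega^\dagger$ becomes a sub-normalized LHS assemblage of total weight $Q_\omega\coloneqq\int P_\lambda q_\omega(\lambda)\,d\lambda$, where $q_\omega(\lambda)\coloneqq\tr(K_\omega^\dagger K_\omega\sigma_\lambda)$; hence its evaluation against $\vecF^\omega$ is at most $Q_\omega\,\omega_s(\vecF^\omega)$. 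Dividing by $\omega_s(\vecF^\omega)$ and summing over $\omega$ collapses the estimate to $\sum_\omega Q_\omega$, and the POVM identity $\sum_\omega K_\omega^\dagger K_\omega=\Id$ forces $\sum_\omega q_\omega(\lambda)=1$ pointwise in $\lambda$, so $\sum_\omega Q_\omega=\int P_\lambda\,d\lambda=1$. This yields $\omega_s(\vecF)\le 1$ and closes the argument.
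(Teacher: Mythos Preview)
Your argument is correct and rests on the same mechanism as the paper's: pulling back the branch functionals through the adjoint of the 1W-LOCC, i.e., $F^\omega_{a'|x'}\mapsto K_\omega^\dagger F^\omega_{a'|x'}K_\omega$ followed by the wiring pullback, and then controlling the steering bound using the fact that the map sends LHS assemblages to (sub-normalized) LHS assemblages. The difference is organizational. The paper works branch by branch: for each fixed $\omega$ it builds a single pullback $\check{\vecF}$, proves $\omega_s(\check{\vecF})\le\omega_s(\vecF)$ (Lemma~\ref{Lemma:Nonincreasing_Lemma}), deduces $\So[\mathcal{D}_\omega(\vecA)]\le\So(\vecA)$, and only then sums against the weights $P(\omega)$. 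You instead aggregate all branches into one functional $F_{a|x}=\sum_\omega K_\omega^\dagger F^\omega_{a|x}K_\omega/\omega_s(\vecF^\omega)$ and use the completeness relation $\sum_\omega K_\omega^\dagger K_\omega=\Id$ directly to force $\omega_s(\vecF)\le 1$. Your route is marginally more economical (one estimate instead of one per branch, and no auxiliary normalization by $\tr[\M_\omega(\vecA)]$), while the paper's per-branch lemma yields the slightly stronger intermediate statement that each subchannel individually cannot increase $\sup_{\vecF\succeq0}\Gamma_s$. Both buy the same conclusion; the underlying idea is identical.
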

To appreciate the motivation of formulating 1W-LOCCs in the above manner and the definition of the trace of the assemblage $\M_\omega(\vecA)$, we refer the readers to~\cite{Gallego}. Moreover, to ease notation, henceforth, we denote $P(x|x',\omega)P(a'|x',a,x,\omega)$ by $Q(a',x',a,x,\omega)$ and define
\begin{eqnarray}\label{Eq:Domega}
	\mathcal{D}_\omega (\vecA)\coloneqq\frac{K_\omega \widetilde{\W}_\omega (\vecA) K_\omega^\dagger}{\tr\sqp{\M_\omega (\vecA)}}.
\end{eqnarray}
To prove Theorem~\ref{Thm:Monotonic}, we shall make use of the following lemma.
\begin{lemma}\label{Lemma:Nonincreasing_Lemma}
For all $\omega$ and assemblages $\vecA$,
\begin{eqnarray}
	\sup_{\vecF\succeq 0} \Gamma_s\sqp{\mathcal{D}_\omega (\vecA),\vecF}\le\sup_{\vecF\succeq 0} \Gamma_s(\vecA,\vecF).
\end{eqnarray}
\end{lemma}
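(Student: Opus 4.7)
The plan is to construct, for each non-negative $\vecF$ appearing in the supremum on the left-hand side, a non-negative functional $\vecF'$ for the original assemblage such that $\Gamma_s(\vecA,\vecF')\ge \Gamma_s(\mathcal{D}_\omega(\vecA),\vecF)$. The lemma then follows upon taking the supremum over $\vecF\succeq 0$ on both sides. The natural ``Heisenberg-picture'' pullback is
\[
F'_{a|x}\;\coloneqq\;\sum_{a',x'}Q(a',x',a,x,\omega)\,K_\omega^{\dagger}\,F_{a'|x'}\,K_\omega,
\]
using the shorthand $Q(a',x',a,x,\omega)\coloneqq P(x|x',\omega)P(a'|x',a,x,\omega)$ introduced above the lemma. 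It is manifestly positive semidefinite since each $F_{a'|x'}\succeq 0$ and $Q\ge 0$.

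I would first verify, by a direct expansion using the cyclicity of the trace and the definition of $\widetilde{\mathcal{W}}_\omega$, the numerator identity
\[
\sum_{a,x}\tr(F'_{a|x}\,\sigma_{a|x})\;=\;p_\omega\sum_{a',x'}\tr\!\left[F_{a'|x'}\,\mathcal{D}_\omega(\vecA)_{a'|x'}\right],
\]
with $p_\omega\coloneqq\tr\M_\omega(\vecA)$. This step is pure bookkeeping: pulling the sums inside the trace and applying the definitions of $\widetilde{\mathcal{W}}_\omega$, $\M_\omega$, and $\mathcal{D}_\omega$ in sequence.

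The central ingredient is bounding the steering bound $\bc(\vecF')$. Here the key observation is that the wiring map preserves the LHS cone: given an LHS decomposition $\sigma^{\rm LHS}_{a|x}=\int P_\lambda\,P(a|x,\lambda)\,\sigma_\lambda\,d\lambda$, the wired assemblage $\widetilde{\mathcal{W}}_\omega(\sigma^{\rm LHS})_{a'|x'}=\int P_\lambda\,\tilde P(a'|x',\lambda)\,\sigma_\lambda\,d\lambda$ is again LHS, with the new response function $\tilde P(a'|x',\lambda)\coloneqq\sum_{a,x}Q(a',x',a,x,\omega)P(a|x,\lambda)$ being a legitimate probability (non-negative and normalized in $a'$ for each $x'$). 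Conjugation by $K_\omega$ then produces a positive-semidefinite sub-LHS assemblage which, after normalization by $Z_\tau\coloneqq\tr(K_\omega^{\dagger}K_\omega\tau)$ with $\tau\coloneqq\sum_a\sigma^{\rm LHS}_{a|x}$, yields a bona fide LHS assemblage $\hat\sigma^{\rm LHS}$ on Bob's Hilbert space. Substituting back into the definition of $\bc(\vecF')$ and using $\sum_\omega K_\omega^{\dagger}K_\omega=\mathbb{I}$ (so that $Z_\tau\le 1$) gives $\bc(\vecF')\le\bc(\vecF)$.

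The hard part will be the final step: combining the two estimates yields only $\Gamma_s(\vecA,\vecF')\ge p_\omega\,\Gamma_s(\mathcal{D}_\omega(\vecA),\vecF)$, which falls short of the claim by the factor $p_\omega\le 1$. To close this gap, my plan is to sharpen the LHS bound by passing to the extreme-point form $\bc(\vecF')=\sup_{\tau}\sum_{x'}\max_{a'}\tr\!\left(K_\omega^{\dagger}F_{a'|x'}K_\omega\,\tau\right)$, in which the post-filter state reached by the supremum is constrained to the range of $K_\omega$; careful accounting of the coupling between $Z_\tau$ and the configuration realizing the LHS supremum should absorb the unwanted $p_\omega$. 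An alternative route, suggested by the identity $\So(\vecA)=\SR(\vecA)$ recorded in the ``Note added'' of the paper, is to invoke the SDP-based per-outcome monotonicity of the steering robustness $\SR$ under deterministic 1W-LOCC subchannels, which delivers the lemma at once. Either way, taking the supremum over $\vecF\succeq 0$ on both sides of the resulting ratio inequality completes the argument.
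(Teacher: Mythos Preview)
Your overall strategy---pull the steering functional back through the 1W-LOCC subchannel and compare steering bounds---is exactly what the paper does. The difference is a single normalization that you omit and that dissolves the ``$p_\omega$ gap'' you struggle with at the end.

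The paper does not use your $\vecF'$ but rather $\check{\vecF}\coloneqq\{\check F_{a|x}\}_{a,x}$ with
\[
\check F_{a|x}\;=\;\frac{1}{p_\omega}\,F'_{a|x}\;=\;\frac{1}{\tr\M_\omega(\vecA)}\sum_{a',x'}Q(a',x',a,x,\omega)\,K_\omega^{\dagger}F_{a'|x'}K_\omega .
\]
Because $\Gamma_s$ is homogeneous of degree zero in the functional, $\Gamma_s(\vecA,\vecF')=\Gamma_s(\vecA,\check{\vecF})$, so the rescaling is immaterial to the ratio; what it changes is \emph{where} the factor $p_\omega$ sits. With $\check{\vecF}$ the numerator identity reads exactly
\[
\sum_{a,x}\tr(\check F_{a|x}\sigma_{a|x})\;=\;\sum_{a',x'}\tr\!\bigl(F_{a'|x'}\,[\mathcal{D}_\omega(\vecA)]_{a'|x'}\bigr),
\]
with no stray $p_\omega$. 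The whole burden is then on the denominator, and the paper dispatches it in one line: for any LHS assemblage the same rewriting gives $\sum_{a,x}\tr(\check F_{a|x}\tilde\sigma_{a|x})=\sum_{a',x'}\tr(F_{a'|x'}[\mathcal{D}_\omega(\tilde{\vecA})]_{a'|x'})$, and since $\mathcal{D}_\omega$ carries LHS into LHS one gets $\bc(\check{\vecF})\le\bc(\vecF)$ directly. That is the ``sharper LHS bound'' you were reaching for; no extreme-point analysis or coupling argument is needed, only the inclusion $\{\mathcal{D}_\omega(\tilde{\vecA}):\tilde{\vecA}\in\text{LHS}\}\subseteq\text{LHS}$. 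In your version the same computation gives only $\bc(\vecF')\le\bc(\vecF)$, which in terms of $\check{\vecF}$ is the weaker statement $\bc(\check{\vecF})\le\bc(\vecF)/p_\omega$---hence your missing factor.

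As for your fallback: invoking $\So=\SR$ and the known monotonicity of $\SR$ is not a proof of the lemma but a replacement of it. The lemma is precisely what the paper uses to establish that $\So$ is a steering monotone, so importing $\SR$'s monotonicity short-circuits the argument rather than completing it.
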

\begin{proof}
From the definitions given in Eqs.~\eqref{Eq:SteeringFraction} and~\eqref{Eq:Domega}, we get
\begin{eqnarray}
	&&\sup_{\vecF\succeq 0} \Gamma_s\sqp{\mathcal{D}_\omega (\vecA),\vecF}\nonumber\\
	&&\coloneqq\sup_{\vecF\succeq 0} \sum_{a',x'}\frac{\text{tr}[F_{a'|x'}K_\omega \sum_{a,x}Q(a',x',a,x,\omega)\sigma_{a|x} 
	K_\omega^\dagger]}{\bc (\vecF) \tr\sqp{\M_\omega (\vecA)}}\nonumber\\
	&&=\sup_{\vecF\succeq 0}\frac{1}{\bc (\vecF)}\sum_{a,x}\text{tr}(\check{F}_{a|x}\sigma_{a|x}),
\end{eqnarray}
where $\check{\vecF}\coloneqq\{\check{F}_{a|x}\succeq 0\}_{a,x}$ is defined by
\begin{eqnarray}
	\check{F}_{a|x}\coloneqq\sum_{a',x'}\frac{Q(a',x',a,x,\omega)K_\omega^\dagger F_{a'|x'} K_\omega}{\tr\sqp{\mathcal{M}_\omega (\vecA)}}.
\end{eqnarray}
Now we have
\begin{eqnarray}
	\bc (\check{\vecF}) :&&=\sup_{\vecA\in\text{LHS}} \sum_{a,x}\text{tr}(\check{F}_{a|x}\sigma_{a|x})\nonumber\\
	&&=\sup_{\vecA\in\text{LHS}} \sum_{a',x'}\text{tr}\left(F_{a'|x'}[\mathcal{D}_\omega (\vecA)]_{a'|x'}\right)\nonumber\\
	&&\le\sup_{\vecA\in\text{LHS}} \sum_{a',x'}\text{tr}(F_{a'|x'}\sigma_{a'|x'})=\bc (\vecF),
\end{eqnarray}
where the last inequality follows from the fact that $\vecA\in\text{LHS}$ implies $\mathcal{D}_\omega (\vecA)\in\text{LHS}$~\cite{Gallego}, and thus $\left\{ \mathcal{D}_\omega(\vecA)\ |\ \vecA\in\text{LHS}\right\}$ is a subset of LHS. Combining the above results, we have $\sup_{\vecF\succeq 0} \Gamma_s\sqp{\mathcal{D}_\omega (\vecA),\vecF}\le\sup_{\vecF\succeq 0} \Gamma_s(\vecA,\check{\vecF})$.
Since we have $\{\check{\vecF}\ |\ \vecF\succeq 0\}\subseteq\{ \vecF\succeq 0\}$, this means $\sup_{\vecF\succeq 0} \Gamma_s(\vecA,\check{\vecF})\le\sup_{\vecF\succeq 0} \Gamma_s(\vecA,\vecF)$, and hence the lemma.
\end{proof}

To complete the proof of Theorem~\ref{Thm:Monotonic}, it suffices to note, first, that when $\So\sqp{\mathcal{D}_\omega (\vecA)}=0$, the inequality $\So\sqp{\mathcal{D}_\omega (\vecA)}\le\So(\vecA)$ holds trivially, whereas when  $\So\sqp{\mathcal{D}_\omega (\vecA)}=\sup_{\vecF\succeq 0} \Gamma_s\sqp{\mathcal{D}_\omega (\vecA),\vecF}-1 > 0$, Lemma~\ref{Lemma:Nonincreasing_Lemma} implies
\begin{eqnarray}
	\So\sqp{\mathcal{D}_\omega (\vecA)}\le\sup_{\vecF\succeq 0} \Gamma_s(\vecA,\vecF)-1\le\So(\vecA).
\end{eqnarray}
This means that $\So\sqp{\mathcal{D}_\omega (\vecA)}\le\So(\vecA)$ in general. Since $\tr\sqp{\M_\omega (\vecA)}\ge 0$ for all $\omega$ and $\sum_\omega \tr\sqp{\M_\omega (\vecA)}\le1$ for (deterministic) 1W-LOCCs, we must have
\begin{eqnarray}\label{LemmaF3_estimate}
	\sum_\omega \tr\sqp{\M_\omega (\vecA)}\So\sqp{\mathcal{D}_\omega (\vecA)}\le\So(\vecA),
\end{eqnarray}
which completes the proof of Theorem~\ref{Thm:Monotonic} by noting that $P(\omega)=\tr\sqp{\M_\omega (\vecA)}$ holds by definition.

\subsection{Proof of quantitative relations between optimal steering fraction $\So$ and steerable weight $\SW$}
\label{App:SW}

Here, we give a proof of Proposition~\ref{Prop:So-SW}.
\begin{proof}
First, note that the chain of inequalities holds trivially if $\vecA$ is unsteerable, since $\SW(\vecA)=0$ in this case. To prove that $\So(\vecA)\le\SW(\vecA)\So(\vecA^\text{S})$ holds in general, we thus assume that $\So(\vecA)=\sup_{\vecF\succeq0} \Gamma_s(\vecA,\vecF)-1 > 0$ and recall from the condition of Proposition~\ref{Prop:So-SW} that $\sigma_{a|x}=[1-\SW(\vecA)]\sigma_{a|x}^\text{\rm US}+\SW(\vecA)\sigma_{a|x}^\text{\rm S}$; then
\begin{align}
	\So(\vecA)=&\sup_{\vecF\succeq 0} \frac{1}{\bc (\vecF)}\sum_{a,x}\text{tr}(F_{a|x}\sigma_{a|x})-1\nonumber\\
	\le&\sup_{\vecF\succeq 0}\frac{1-\SW(\vecA)}{\bc (\vecF)}\sum_{a,x}\text{tr}(F_{a|x}\sigma_{a|x}^\text{US})\nonumber\\
	&+\sup_{\vecF\succeq 0} \frac{\SW(\vecA)}{\bc (\vecF)}\sum_{a,x}\text{tr}(F_{a|x}\sigma_{a|x}^\text{S})-1\nonumber\\
	=&\,\left[1-\SW(\vecA)\right]\left[\sup_{\vecF\succeq 0} \Gamma_s(\vecA^\text{US},\vecF)-1\right]\nonumber\\
	&+\SW(\vecA)\left[\sup_{\vecF\succeq 0} \Gamma_s(\vecA^\text{S},\vecF)-1\right]\nonumber\\
	\le &\, \SW(\vecA)\So(\vecA^\text{S}),
\end{align}
where the last inequality follows from the fact that $\vecA^\text{US}\in\text{LHS}$. This proves the first inequality.

To prove the second inequality, $\SW(\vecA)\So(\vecA^\text{S})\le\So(\vecA)+2\left[1-\SW(\vecA)\right]$, we note that the triangle inequality implies
\begin{equation*}
\begin{split}
\SW(\vecA)\Gamma_s(\vecA^\text{S},\vecF)\le\Gamma_s(\vecA,\vecF)+\left[1-\SW(\vecA)\right]\Gamma_s(\vecA^\text{US},\vecF)
\end{split}
\end{equation*}

Maximizing both sides over all possible $\vecF\succeq 0$ and using the definition of optimal steering fraction gives
\begin{align*}
	\SW(\vecA)\So(\vecA^\text{S})&+\SW(\vecA)\\
	\le \So(\vecA)&+\left[1-\SW(\vecA)\right]\So(\vecA^\text{US})+2-\SW(\vecA)
\end{align*}
Since $\So(\vecA^\text{US})=0$ by definition, simplifying the above inequality therefore leads to the desired inequality and completes the proof.
\end{proof}

\subsection{Proof of quantitative relations between optimal steering fraction $\So$ and steerable robustness $\SR$}
\label{App:SR}

Here, we give a proof of Proposition~\ref{Prop:So-SR}, which proceeds analogously with the proof of Proposition~\ref{Prop:So-SW}.
\begin{proof}
Again, we focus on the nontrivial scenario where $\So(\vecA)=\sup_{\vecF\succeq 0}\Gamma_s(\vecA,\vecF)-1 > 0$. 
To prove $\So(\vecA)\le \SR(\vecA)\left[\So(\tilde{\vecA})+2\right]$, we note from the condition of the theorem $\vecA^{\rm US}=\frac{1}{1+\SR(\vecA)}\vecA+\frac{\SR(\vecA)}{1+\SR(\vecA)}\tilde{\vecA}$, the definitions of $\So$, $\Gamma_s$, and the triangle inequality that
\begin{align*}
\So(\vecA)\le &\, \left[1+ \SR(\vecA)\right]\sup_{\vecF\succeq 0}\Gamma_s(\vecA^\text{US},\vecF)\\
&+\SR(\vecA)\sup_{\vecF\succeq 0}\Gamma_s(\tilde{\vecA},\vecF)-1\\
\le &\, \left[1+\SR(\vecA)\right]\So(\vecA^\text{US})+\SR(\vecA)\So(\tilde{\vecA})+2\SR(\vecA)\\
=&\,\SR(\vecA)\left[\So(\tilde{\vecA})+2\right],
\end{align*}
where the last equality follows from $\vecA^\text{US}\in\text{LHS}$. 

To show the other inequality, we note that
\begin{align*}
\Gamma_s(\vecA,\vecF)+\left[1+\SR(\vecA)\right]\Gamma_s(\vecA^\text{US},\vecF)=\SR(\vecA)\Gamma_s(\tilde{\vecA},\vecF)
\end{align*}
Rearranging the term, taking the supremum over $\vecF\succeq 0$ on both sides, and noting that $\So(\vecA^\text{US})=0$, we thus obtain the desired inequality $\SR(\vecA)\So(\tilde{\vecA})-2\le\So(\vecA).$
\end{proof}

\end{document}